\theoremstyle{definition}
\theoremstyle{plain}
\newtheorem{Lem}{Lemma}
\newtheorem{Pro}{Proposition}
\newtheorem{Thm}{Theorem}
\newtheorem{Cor}{Corollary}
\theoremstyle{remark}
\newcommand{\Myd}{\;\mathrm{d}} 
\newcommand{\Ind}[1]{\mathbbm{1}_{#1}} 
\newcommand{\Bigo}[1]{\operatorname{O}\left(#1\right)}
\newcommand{\Bigoas}[1]{\operatorname{O}_{\text{as}}\left(#1\right)}
\newcommand{\Bigop}[1]{\operatorname{O}_{\text{p}}\left(#1\right)}
\newcommand{\Smalloas}[1]{\operatorname{o}_{\text{as}}\left(#1\right)}
\newcommand{\Smallop}[1]{\operatorname{o}_{\text{p}}\left(#1\right)}
\DeclareMathOperator*{\Exp}{\mathbb{E}} 
\title{Counterfactual Inference in\\ Duration Models with Random Censoring}
\author{Jiun-Hua Su\thanks{
This paper is the second chapter of my dissertation.
I am indebted to my advisor James Powell for valuable guidance.
I am also grateful to Bryan Graham, Aditya Guntuboyina, Michael Jansson, Demian Pouzo, and Pedro Sant'Anna for useful suggestions and discussions. Address correspondence to Jiun-Hua Su, 128 Academia Road, Section 2, Nankang, Taipei, 115 Taiwan; E-mail address: jhsu@econ.sinica.edu.tw.}}
\affil{Institute of Economics\\Academia Sinica}
\begin{document}
\maketitle

\medskip
\begin{abstract}
We propose a counterfactual Kaplan-Meier estimator that incorporates exogenous covariates and unobserved heterogeneity of unrestricted dimensionality in duration models with random censoring.
Under some regularity conditions, we establish the joint weak convergence of the proposed counterfactual estimator and the unconditional Kaplan-Meier (\citeyear{KaplanMeier1958}) estimator.
Applying the functional delta method, we make inference on the \emph{cumulative hazard policy effect}, that is, the change of duration dependence in response to a counterfactual policy.
We also evaluate the finite sample performance of the proposed counterfactual estimation method in a Monte Carlo study.

\bigskip
\noindent
\textit{Keywords}: Counterfactual policy effect, Random censoring, Duration analysis, Kaplan-Meier estimator

\medskip
\noindent
\textit{JEL Classification}: C14, C41, C24
\bigskip
\end{abstract}

\fontsize{12}{18pt}\selectfont
\newpage
\section{Introduction}\label{Introduction}
Policy evaluation is one of the important areas in social science.
Counterfactual analysis is an approach that provides policy recommendations
when a policy is not implemented yet or when a similar quasi-experiment is infeasible.
Recent studies on counterfactual analysis, for example \citet{Rothe2010} and \citet{ChernozhukovFernandez-ValEtAl2013a}, emphasize the unconditional distributional effect of an exogenous manipulation of covariates on an outcome variable of interest.
Methods in these studies are usually based on data that are completely observed; however, sampling schemes may generate incomplete data and thus restrict their applicability.
For example, duration data on unemployment spells, collected by the Current Population Survey, are commonly believed to be subject to right censoring, as explained by \citet{Kiefer1988}.

The main objectives of this paper are to estimate the unconditional distribution of a duration variable affected by a counterfactual policy that exogenously manipulates covariates, and to evaluate associated policy effects by the comparison between the counterfactual and original unconditional distribution of the duration variable.
Specifically, we consider a nonseparable model
\begin{align}\label{nonseparable}
T=\varphi(X,\varepsilon),
\end{align}
where $T$ is a nonnegative duration variable of interest, $X$ is a $d$-dimensional vector of time-invariant covariates, $\varepsilon$ is individual unobserved heterogeneity in an arbitrary measurable space of unrestricted dimensionality, and $\varphi$ is a structural function that is unknown to researchers.
In addition, the right censoring may make $T$ unobserved; instead, the observable data are the vector $X$ of covariates,
\begin{align}\label{censor}
Y=\min\{T,C\}\quad \text{and}\quad \delta=\Ind{[T\leq C]}\;,
\end{align}
where $C$ is a censoring random variable, which is only observed for censored observations, and $\Ind{[\cdot]}$ is an indicator function.
Policy makers consider the counterfactual scenario that exogenously changes $X$ to $X^{*}$ and leads to the counterfactual duration variable
\begin{align}\label{counterfactual}
T^{*}=\varphi(X^{*},\varepsilon),
\end{align}
and attempt to evaluate the policy effect
\begin{align*}\label{functional}
\nu(F_{T^{*}})-\nu(F_{T}),
\end{align*}
where $F_{T}$ and $F_{T^{*}}$ are the cumulative distribution functions (CDFs) of $T$ and $T^{*}$, respectively, and $\nu$ is some functional defined on the collection of all CDFs. 
Such an effect $\nu(F_{T^{*}})-\nu(F_{T})$ may be important in policy evaluation.
For example, although unemployment insurance benefits may smooth the income fluctuation of the unemployed, it may discourage the unemployed from searching for jobs.
Therefore, policy makers would be interested in the effect of reducing wage replacement ratio on the cumulative hazard rate of unemployment spells.
In this case, $T$ is the unemployment duration, $X$ is the wage replacement ratio, and the functional $\nu$ is a map from a CDF to its cumulative hazard function, that is, $\nu:F\mapsto
\int_{[0,\cdot]} \frac{1}{1-F^{-}}\Myd F$.\footnote{
For any c\`{a}dl\`{a}g function $F$, we write $F^{-}$ for its left-continuous version, that is,
$F^{-}(t)\equiv \lim_{s\uparrow t}F(s)$.
}


We propose a nonparametric estimation method of the unconditional CDF $F_{T^{*}}$ arising from an exogenous manipulation of covariates $X$ on the duration variable $T$.
This proposed nonparametric estimation method allows researchers to conduct a counterfactual policy analysis, rather than just a descriptive analysis, of duration data.\footnote{
\citet{Lancaster1992} and \citet{CameronTrivedi2005} indicate that the nonparametric Kaplan-Meier estimation is traditionally viewed as a descriptive analysis.
}
Specifically, we evaluate $\nu(F_{T^{*}})-\nu(F_{T})$ by replacing $F_{T^{*}}$ and $F_{T}$ with their nonparametric estimator, respectively.
On the one hand, we construct the unconditional Kaplan-Meier (\citeyear{KaplanMeier1958}) estimator $\hat{F}_{T}$.
On the other hand, under regularity conditions, the unconditional CDF of $T^{*}$ is recovered by  $F_{T^{*}}(t)=\mathbb{E}(F_{T|X}(t|X^{*}))$ where $F_{T|X}$ is the conditional CDF of $T$ given $X$; thus, we follow the analogy principle to propose a two-stage fully nonparametric estimator of the counterfactual CDF of $T^{*}$.
In particular, we first construct a variant of \citeauthor{Beran1981}'s (\citeyear{Beran1981}) conditional Kaplan-Meier estimator $\hat{F}_{T|X}$ and then take average of $\hat{F}_{T|X}$ with respect to the empirical distribution of $X^{*}$ to obtain a counterfactual estimator $\hat{F}_{T^{*}}$.
The first-stage nonparametric estimation can avoid the misspecification of the conditional CDF, which is emphasized in \citet{RotheWied2013}.
Moreover, the proposed first-stage estimator, instead of the kernel CDF estimator in \citet{Rothe2010}, is essential to avoid the estimation bias in the presence of censoring.
Indeed, our simulation experiments show that the proposed estimator $\hat{F}_{T^{*}}$, compared with Rothe's counterfactual CDF estimator, has smaller mean integrated absolute error (MIAE) and root mean integrated squared error (RMISE) when duration data are subject to censoring.

To establish the validity of the proposed approach, we show that under some regularity conditions, the vector $(\hat{F}_{T^{*}}-F_{T^{*}},\hat{F}_{T}-F_{T})^{\top}$ converges weakly to a two dimensional centered Gaussian process over a specific compact subset of $\mathbb{R}^{2}_{+}$ at the rate $\sqrt{n}$.
This convergence rate avoids the curse of dimensionality even though the first-stage estimator $\hat{F}_{T|X}$ converges at a rate less than $\sqrt{n}$.
Applying the functional delta method, we can obtain the asymptotic distribution of the counterfactual policy effect $\sqrt{n}(\nu(\hat{F}_{T^{*}})-\nu(\hat{F}_{T}))$, which allows us to evaluate the effect of counterfactual policy intervention on the cumulative hazard function.

The proposed method also complements the literature on decomposition methods.
Decomposition methods are usually used to explain the difference in unconditional distributional features of an outcome variable across two different demographic groups or time periods.
The between-group difference is usually decomposed into a structure effect and a composition effect.\footnote{
A structure effect arises because structural functions are different between two groups, and a composition effect reflects the differences in covariates between two groups.
The early development in decomposition methods is well surveyed by \citet{FortinLemieuxEtAl2011}.
Recently, \citet{Rothe2015} further investigates a detailed decomposition, which attributes the composition effect to each covariate.
}
In the presence of complete data, \citet{Rothe2010} proposes a two-stage fully nonparametric estimation of the composition effect, whereas \citet{ChernozhukovFernandez-ValEtAl2013a} develop a two-stage semiparametric estimation of this effect by either distribution regression or quantile regression.
Taking random censoring into account, \citet{Garcia-Suaza2016} studies the effect based on the proportional hazard specification.
In contrast, the method in this paper is fully nonparametric; additionally, as explained by~\citet{Rothe2010}, we can regard $X^{*}$ as observable covariates of a different group, and $\nu(F_{T^{*}})-\nu(F_{T})$ as the composition effect in the setup~(\ref{nonseparable})-(\ref{counterfactual}) of random censoring.

Throughout this paper, all random variables are defined on the same probability space $(\Omega,\mathcal{A},\mathbb{P})$.
We denote $\mathbb{D}[c_{1},c_{2}]$ and $l^{\infty}[c_{1},c_{2}]$ by the set of c\`{a}dl\`{a}g and bounded functions defined on the interval $[c_{1},c_{2}]$, respectively.
We write $\Rightarrow$ for weak convergence in a function space equipped with the uniform norm, and $a\wedge b$ for the minimum of $a$ and $b$.
We also denote the density of $X$ by $m$, and the density of $X^{*}$ by $m^{*}$.
For a generic random variable $U$, we write $F_{U}$ for the CDF of $U$, $f_{U}$ for the derivative of $F_{U}$, $F_{U|X}$ for the conditional CDF of $U$ given $X$, and $f_{U|X}(u|x)$ for the derivative with respect to $u$ of $F_{U|X}(u|x)$; additionally, let $F^{\delta}_{U}(u)=\mathbb{P}(U\leq u, \delta=1)$ and $F^{\delta}_{U|X}(u|x)=\mathbb{P}(U\leq u, \delta=1|X=x)$.
We assume that $T,C,X$, and $X^{*}$ are absolutely continuous random variables.
The absolute continuity of the duration variable $T$ is reasonable because $T$ is expected to be generated by a transition process, which is usually modeled in continuous time.
(See \citet{CameronTrivedi2005} and \citet{FlorensFougereEtAl2008} for example.)
Furthermore, we only consider absolutely continuous covariates for ease of exposition because the proposed estimation method can be revised to include discrete covariates.
Alternatively, in the case of a binary policy variable, \citet{SantAnna2016} extends the method of Kaplan-Meier integrals and studies various treatment effects when the outcome may be right censored.

The remainder of this paper is organized as follows.
Section~\ref{Model} discusses the setup of duration analysis, the objects of interest, and the counterfactual Kaplan-Meier estimator.
Section~\ref{AsympInf} shows the asymptotic theory of the proposed estimator and statistical inference on the associated policy effects.
Section~\ref{Simulation} presents the results of Monte Carlo simulation.
Section~\ref{Conclusion} concludes.
Technical proofs are deferred to Appendix.

\section{Model and estimation}\label{Model}
\subsection{Setup and objects of interest}\label{Objects}
The flexible duration model in~(\ref{nonseparable}) can avoid several types of model misspecification.
First, \citet{LuWhite2014} point out that the nonseparability of $\varepsilon$ enables treatment effect and marginal effect could depend on unobservable heterogeneity. 
In addition, the unrestricted dimensionality of $\varepsilon$ can avoid incorrect inference due to the inclusion of limited heterogeneity, as argued by \citet{BrowningCarro2007}.
\citet{HoderleinMammen2009} also indicate that $\varepsilon$ can be viewed as an element of an infinitely dimensional function space; for example, it could be individual preference for leisure in the analysis of unemployment spells.
Finally, both the marginal distribution of $\varepsilon$ and the conditional distribution of $T$ given $\varepsilon$ are unspecified to avoid inappropriate inference caused by parametric assumptions.\footnote{
\citet{Lancaster1992} documents many alternatives of parametric assumptions about the hazard function in mixture models.
Parametric specification of the heterogeneity distribution and duration dependence is, however, a well-known issue in econometrics.
See the discussion in \citet{HausmanWoutersen2014}.
}


The random censoring feature in~(\ref{censor}) is prevalent in duration analysis and usually arises because of sampling schemes, for example, a random failure to follow up an individual during the study period.
We refer readers to \citet{Moore2016} for more underlying reasons of random censoring.
In this paper, we consider the simple counterfactual scenario that policy intervention does not
affect the structural function $\varphi$ in~(\ref{counterfactual}); however, we allow a change in the censoring variable $C$ after policy intervention.\footnote{
As suggested in \citet{FortinLemieuxEtAl2011}, policy intervention may result in an alternative structural function $\varphi^{*}$ in general equilibrium.
}

A counterfactual policy that changes the duration variable from $T$ to $T^{*}$ yields the \emph{distribution policy effect}
\begin{align*}
\triangle_{F}(t)\equiv F_{T^{*}}(t)-F_{T}(t).
\end{align*}
For instance, since the shape of the distribution of unemployment spells may affect the government expenditures on unemployment insurance, the distribution policy effect matters for policy makers concerning fiscal deficits.
In addition, the literature on duration models especially emphasizes the duration dependence, that is, the shape of the hazard function.\footnote{
The hazard function of a nonnegative duration variable $T$ is defined as
\begin{align*}
\lambda_{T}(t)\equiv \lim_{u\to 0}\frac{\mathbb{P}(t\leq T<t+u|T\geq t)}{u}.
\end{align*}
}
The change of duration dependence in response to a counterfactual policy can be answered by the \emph{cumulative hazard policy effect}
\begin{align*}
\triangle_{\Lambda}(t)\equiv\Lambda_{T^{*}}(t)-\Lambda_{T}(t),
\end{align*}
where $\Lambda_{T^{*}}(t)=\int_{0}^{t}\frac{F_{T^{*}}(\!\!\Myd u)}{1-F_{T^{*}}^{-}(u)}$ and $\Lambda_{T}(t)=\int_{0}^{t}\frac{F_{T}(\!\!\Myd u)}{1-F_{T}^{-}(u)}$ are the cumulative hazard functions of $T^{*}$ and $T$, respectively.
In the case of unemployment spells, policy makers would be interested in the cumulative hazard policy effect because it would evaluate whether a counterfactual policy is beneficial for a target group, for example the long-term unemployed, to escape the unemployment trap.
Other counterfactual policy effects, such as quantile policy effect and Lorenz curve policy effect, may also be of interest. See \citet{Bhattacharya2007} and \citet{Rothe2010} for treatment of these and further examples.
When the objects of interest are the aforementioned policy effects, the identification of $\varphi$ is not necessary, as indicated by \citet{Rothe2010}; thus, we maintain the flexible specification of the structural function in~(\ref{nonseparable}).\footnote{
\citet{Matzkin2003} provides conditions such that in the absence of censoring, the structural function $\varphi$ can be identified if $\varphi(x,\varepsilon)$ is strictly increasing in unobserved scalar heterogeneity $\varepsilon$ for each $x$.
}

\subsection{Nonparametric identification and estimation}
Since a counterfactual policy effect can be generally written as $\nu(F_{T^{*}})-\nu(F_{T})$ for some specific functional $\nu$, we start by identifying the CDFs $F_{T}$ and $F_{T^{*}}$.
We first introduce the following assumptions.\\[0.2cm]
\textbf{Assumption D} (Data)
\begin{enumerate}[label=D\arabic*]
\item \label{D1} Both $\{(Y_{i},T_{i},C_{i},\delta_{i},X_{i})\}_{i=1}^{n}$ and $\{X_{j}^{*}\}_{j=1}^{n^{*}}$ are independent and identically distributed across individuals.
\item \label{D2}
(i) $\{(Y_{i},\delta_{i},X_{i})\}_{i=1}^{n}$ are observable;
(ii) $\{X_{j}^{*}\}_{j=1}^{n^{*}}$ are observable and $n^{*}=n$.
\end{enumerate}

Assumption~\ref{D1} is common in models of cross-sectional data.
Assumption~\ref{D2}(i) is also common in duration models where researchers know whether the observed duration variable is censored.
Assumption~\ref{D2}(ii) is innocuous when we consider the counterfactual policy that shifts $X$ to $X^{*}=\pi(X)$ for some measurable function $\pi$, whereas this assumption is imposed for convenience when we treat $X^{*}$ as observable covariates of a different group in the analysis of the composition effect.\\

\noindent
\textbf{Assumption I} (Identification)
\begin{enumerate}[label=I\arabic*]
\item \label{I1} $T$ and $C$ are independent.
\item \label{I2} $T$ and $C$ are conditionally independent given $X$.
\item \label{I3} $\varepsilon$ is independent of both $X$ and $X^{*}$.
\item \label{I4} The support of $X^{*}$ is a subset of the support of $X$.
\end{enumerate}

Assumptions~\ref{I1} and~\ref{I2} are commonly imposed in survival analysis,
for example \citet{Lancaster1992} and \citet{KalbfleischPrentice2002} for Assumption~\ref{I1} and \citet{Dabrowska1989}, \citet{Iglesias-PerezGonzalez-Manteiga1999}, and \citet{Gneyou2014} for Assumption~\ref{I2}.
Assumption~\ref{I1} ensures that the random censoring is non-informative; that is, $C$ does not provide any information about $T$, and vice versa.
Assumption~\ref{I2} holds if random censoring is non-informative when covariates are controlled.
Note that neither Assumption~\ref{I1} nor Assumption~\ref{I2} is stronger.\footnote{
See the examples on page 65 of \citet{Stoyanov2014}.}
If $C$ is independent of $(X,T)$, then Assumptions~\ref{I1} and~\ref{I2} are satisfied; however, these two assumptions are not sufficient for the independence between $C$ and $X$.\footnote{
The independence between $C$ and $X$ holds if Assumptions~\ref{I1} and~\ref{I2} hold and the family of distributions of $T$ given $X$ is boundedly complete; that is, for a bounded function $g$, $\Exp[g(T)|X]=0$ almost surely implies $g(T)=0$ almost surely.
See the discussion in \citet{Dawid1998}.
}
Assumptions~\ref{I3} and \ref{I4} are imposed for counterfactual analysis as in \citet{Rothe2010}.
Assumption~\ref{I3} requires that all covariates are exogenous and thus may be strong in some empirical studies.
If we are interested in the effects arising from the manipulation of some policy variables, this assumption can be weaken by conditional exogeneity of $\varepsilon$ given observable covariates.
To be precise, let $X=(X_{\text{p}},X_{\text{c}})^{\top}$ where $X_{\text{p}}$ and $X_{\text{c}}$ are the vector of policy variables and vector of covariates, respectively.
Policy intervention changes $X_{\text{p}}$ to $X^{*}_{\text{p}}$ but keeps $X_{\text{c}}$ unchanged.
Proposition~\ref{PRO1} below is still valid if Assumption~\ref{I3} is replaced with the assumption that $\varepsilon$ is independent of $(X_{\text{p}},X^{*}_{\text{p}})$ conditional on $X_{\text{c}}$.\footnote{
Similarly, Assumptions~\ref{I3} can be relaxed by the control function approach proposed by \citet{BlundellPowell2003} and \citet{ImbensNewey2009}.
The application of the control function approach is however beyond the scope this paper.
See \citet{Lee2015} for the analysis of counterfactual effects by the control function approach in the absence of random censoring.
}
Assumptions~\ref{I2} and~\ref{I3} imply that censoring occurs exogenously provided that $\varphi(x,\cdot)$ is invertible for all $x$.\footnote{
Suppose that invertibility of $\varphi(x,\cdot)$ holds for all $x$.
Assumption~\ref{I2} implies that $\varepsilon$ and $C$ are conditionally independent given $X$ by Lemmas 4.1 and 4.2 of \citet{Dawid1979}.
If moreover Assumption~\ref{I3} holds, then $\varepsilon$ is independent of $(C,X)$ by Lemma 4.2 of \citet{Dawid1979}.
}
Since nonparametric analyses of counterfactual policy effects resulting from an extrapolation of covariates may be invalid, we impose the overlap condition in Assumption~\ref{I4}.

Assumption I guarantees the identification of $(F_{T^{*}},F_{T})^{\top}$ over a subset of $\mathbb{R}_{+}^{2}$.
\citet{StuteWang1993} show that under Assumption~\ref{I1}, $F_{T}(t)$ is identified for each $t<\tau\equiv\inf\{t:F_{Y}(t)=1\}$.
Under Assumptions~\ref{I3} and~\ref{I4}, we can express $F_{T^{*}}$ as the population average, taken with respect to the distribution of $X^{*}$, of the conditional CDF of $T$ given $X$; to be precise, $F_{T^{*}}(t)=\mathbb{E}(F_{T|X}(t|X^{*}))$.
The identification of $F_{T^{*}}(t)$ thus follows that of $F_{T|X}(t|x)$, and the latter is achieved under Assumption~\ref{I2} for $(t,x)\in[0,\tau)\times\mathbb{R}^{d}$.
Alternatively, the joint CDF of $(T,X)$ can be identified on $[0,\tau)\times\mathbb{R}^{d}$ by replacing Assumption~\ref{I2} with the assumption that $\delta$ and $X$ are conditionally independent given $T$; that is, given the duration, covariates provide no further information on whether censoring occurs.\footnote{
Suppose that Assumption~\ref{I1} holds. If $\delta$ and $X$ are conditionally independent given $T$, then the joint distribution of $(T,X)$ on $[0,\tau)\times \mathbb{R}^{d}$ can be recovered by
\begin{align*}
\mathbb{P}(T\leq t, X\leq x)=
\int \Ind{[z\leq x]}\Ind{[s\leq t]}\exp\left\{\int_{0}^{s}\frac{H_{Y}^{0}(\!\!\Myd y)}{1-F_{Y}(y)}\right\}H_{YX}^{1}(\!\!\Myd s,\!\!\Myd z)
\end{align*}
where $H_{Y}^{0}(y)=\mathbb{P}(Y\leq y, \delta=0)$ and $H_{YX}^{1}(y,x)=\mathbb{P}(Y\leq y, X\leq x,\delta=1)$.
The availability of data on $\{(Y_{i},\delta_{i},X_{i})\}_{i=1}^{n}$ implies that $F_{T|X}(t|x)$ is identified for $(t,x)\in [0,\tau)\times \mathbb{R}^{d}$.
More general results are shown in Equation (1.2) of \citet{Stute1996}.
}
This assumption is imposed in recent studies on duration analysis, for example
\citeauthor{SantAnna2016} (\citeyear{SantAnna2016}, \citeyear{SantAnna2017}), \citet{Garcia-Suaza2016}, and references cited therein.
We summarize the discussion in the following proposition.

\begin{Pro}\label{PRO1}
Suppose that Assumptions~\ref{D1} and~\ref{D2} hold.
Under Assumption~\ref{I1}, $F_{T^{*}}(t)$ is identified for $t \in[0,\tau)$.
If in addition Assumption~\ref{I2} holds, then $F_{T|X}(t|x)$ is identified for $(t,x)\in [0,\tau)\times\mathbb{R}^{d}$.
Moreover, if Assumptions~\ref{I3} and~\ref{I4} are also satisfied, we have $F_{T^{*}}(t)=\mathbb{E}(F_{T|X}(t|X^{*}))$ for $t\in [0,\tau)$.\qed
\end{Pro}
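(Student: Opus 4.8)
The plan is to establish the three assertions in turn, each reducing to a competing‑risks identification performed unconditionally, conditionally on $X=x$, and then after integrating out the nonseparable structure. Assumptions~\ref{D1}--\ref{D2} guarantee that the laws of $(Y,\delta)$, of $(Y,\delta,X)$, and of $X^{*}$ are identified population objects, so it suffices to exhibit each target as a functional of these. For the first assertion I would reproduce the argument of \citet{StuteWang1993}: writing $H=F_{Y}$ and $\tilde H=F_{Y}^{\delta}$, Assumption~\ref{I1} gives $1-H^{-}(y)=(1-F_{T}^{-}(y))(1-F_{C}^{-}(y))$ and $\tilde H(\mathrm{d}y)=(1-F_{C}^{-}(y))F_{T}(\mathrm{d}y)$; dividing the differential form by the survival form cancels the censoring factor, so $\Lambda_{T}(t)=\int_{[0,t]}\tilde H(\mathrm{d}y)/(1-H^{-}(y))$, which is well defined precisely because $1-H^{-}>0$ on $[0,t]$ whenever $t<\tau$. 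Inverting the product‑integral relation $1-F_{T}=\prod_{[0,\cdot]}(1-\Lambda_{T}(\mathrm{d}u))$ — i.e.\ $1-F_{T}(t)=\exp(-\Lambda_{T}(t))$ under the maintained absolute continuity of $T$ — recovers $F_{T}$ on $[0,\tau)$ from the observed distribution, and the bound $t<\tau$ is sharp since censored data cannot allocate the tail mass of $F_{T}$.

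For the second assertion I would rerun exactly the same steps conditionally on $X=x$. Assumption~\ref{I2} makes $T$ and $C$ independent given $X=x$, so the two decompositions hold verbatim for the conditional (sub‑)distributions $F_{Y|X}(\cdot|x)$ and $F_{Y|X}^{\delta}(\cdot|x)$, yielding $\Lambda_{T|X}(t|x)=\int_{[0,t]}F_{Y|X}^{\delta}(\mathrm{d}y|x)/(1-F_{Y|X}^{-}(y|x))$ and, after the same product‑integral inversion, $F_{T|X}(t|x)$, all identified from the law of $(Y,\delta,X)$, which is observable by Assumption~\ref{D2}(i). The one point needing care is the admissible horizon: one verifies that $1-F_{Y|X}^{-}(y|x)>0$ on $[0,t]$ for $(t,x)\in[0,\tau)\times\mathbb{R}^{d}$, which keeps the conditional hazard integral well posed on this set.

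For the third assertion the nonseparable representation together with Assumption~\ref{I3} does the work. Conditioning on $X^{*}$ and using $\varepsilon\perp X^{*}$, a regular‑conditional‑distribution/Fubini argument gives $F_{T^{*}}(t)=\mathbb{E}[\mathbb{P}(\varphi(X^{*},\varepsilon)\le t\mid X^{*})]=\int g(t,x)F_{X^{*}}(\mathrm{d}x)$ with $g(t,x)\equiv\mathbb{P}(\varphi(x,\varepsilon)\le t)$; conditioning on $X$ and using $\varepsilon\perp X$ identifies the same kernel, $F_{T|X}(t|x)=\mathbb{P}(\varphi(x,\varepsilon)\le t\mid X=x)=g(t,x)$. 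Substituting yields $F_{T^{*}}(t)=\int F_{T|X}(t|x)F_{X^{*}}(\mathrm{d}x)=\mathbb{E}(F_{T|X}(t|X^{*}))$, the law of $X^{*}$ being available by Assumption~\ref{D2}(ii). Assumption~\ref{I4} enters only here, to ensure that $F_{T|X}(t|\cdot)$ is evaluated at points in the support of $X$, where the second assertion identified it; together with $t\in[0,\tau)$ this makes the right‑hand side an identified quantity.

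I expect no deep difficulty — the statement is a careful assembly of known facts rather than a new argument — but the delicate points are: writing the competing‑risks decompositions correctly in terms of left‑continuous versions and sub‑distribution functions; justifying the Volterra/product‑integral inversion that turns the identified cumulative hazard back into $F_{T}$ and $F_{T|X}$, and with it pinning $\tau$ as the exact edge of identification under random censoring; and making the conditioning in the third step rigorous when $\varphi$ is assumed only measurable, which is where $\varepsilon\perp(X,X^{*})$ must be used at full strength rather than as mere pairwise independence.
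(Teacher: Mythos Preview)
Your proposal is correct and follows essentially the same route as the paper: the paper also invokes \citet{StuteWang1993} for the unconditional identification of $F_{T}$ on $[0,\tau)$, argues the conditional case by the same competing-risks mechanism (citing Lemma~25.74 of \citet{Vaart1998}), and derives the counterfactual representation by the two-step use of $\varepsilon\perp X^{*}$ and $\varepsilon\perp X$ together with the support condition~\ref{I4}, exactly as you outline. Your write-up is in fact more explicit about the product-integral inversion and about the admissible horizon in the conditional step; the paper simply asserts identification on $[0,\tau)\times\mathbb{R}^{d}$ without addressing the relation between the unconditional $\tau$ and the conditional endpoints, so your caution there is well placed.
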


Proposition~\ref{PRO1} suggests that we follow the analogy principle to construct an estimator of $F_{T^{*}}(t)$ by
\begin{equation}\label{CounterKM}
\hat{F}_{T^{*}}(t;h_{n})=\frac{1}{n}\sum_{i=1}^{n}\hat{F}_{T|X}(t|X_{i}^{*};h_{n}),
\end{equation}
where $\hat{F}_{T|X}$ is the variant of \citeauthor{Beran1981}'s (\citeyear{Beran1981}) conditional Kaplan-Meier estimator, that is
\begin{equation}\label{CKM}
\hat{F}_{T|X}(t|x;h)=1-
\prod_{j=1}^{n}\exp{\left\{-\frac{\Ind{[Y_{j}\leq t,\delta_{j}=1]}B_{n_{j}}(x;h)}{\sum_{\ell=1}^{n}\Ind{[Y_{j}\leq Y_{\ell}]}B_{n_{\ell}}(x;h)}\right\}},
\end{equation}
$\{B_{n_{\ell}}(x;h)\}_{\ell=1}^{n}$ are appropriate weights and $h$ is a tuning parameter.\footnote{
In fact, the estimator $\hat{F}_{T|X}$ is the exponential transformation of the Nalson-Aalen estimator of the cumulative hazard function of $F_{T|X}$.
Additionally,
\citeauthor{Beran1981}'s (\citeyear{Beran1981}) conditional Kaplan-Meier estimator
\begin{align*}
\hat{F}^{\text{KM}}_{T|X}(t|x;h)=1-
\prod_{j=1}^{n}\left\{1-\frac{B_{n_{j}}(x;h)}{\sum_{\ell=1}^{n}\Ind{[Y_{j}\leq Y_{\ell}]}B_{n_{\ell}}(x;h)}\right\}^{\Ind{[Y_{j}\leq t,\delta_{j}=1]}}
\end{align*}
can be viewed as the first-order Taylor series approximation of $\hat{F}_{T|X}$.
}
Different choices of weights are documented in the literature on the conditional Kaplan-Meier estimator.\footnote{
These choices include Gasser-Muller weights and Nadaraya-Watson weights.
See for example \citet{Gonzalez-ManteigaCadarso-Suarez1994}, \citet{Dabrowska1989}, and \citet{Iglesias-PerezGonzalez-Manteiga1999}.
}
In this paper, we construct the counterfactual Kaplan-Meier estimator in (\ref{CounterKM}) based on the Nadaraya-Watson weights
\begin{displaymath}
B_{n_{\ell}}(x;h_{n})=\frac{K(\frac{x-X_{\ell}}{h_{n}})}{\sum_{i=1}^{n}K(\frac{x-X_{i}}{h_{n}})},\quad\quad \ell=1,2,\cdots,n ,
\end{displaymath}
for some kernel function $K$ and bandwidth $h_{n}$.
For ease of notation, we suppress the dependence on $h_{n}$ for $\hat{F}_{T^{*}}$ and $\hat{F}_{T|X}$ hereafter.
To estimate $F_{T}(t)$, we adopt the unconditional Kaplan-Meier (\citeyear{KaplanMeier1958}) estimator \begin{equation}\label{KM}
\hat{F}_{T}(t)=1-
\prod_{j=1}^{n}\left(\frac{n-j}{n-j+1}\right)^{\Ind{[Y_{(j)}\leq t,\delta_{(j)}=1]}}
\end{equation}
where $\{(Y_{(j)},\delta_{(j)})\}_{j=1}^{n}$ are the $n$ pairs of observations ordered on the
order statistics of $\{Y_{i}\}_{i=1}^{n}$.

\section{Asymptotic theory}\label{AsympInf}
\subsection{Representations}\label{Linear}
Asymptotic properties of the unconditional Kaplan-Meier estimator $\hat{F}_{T}$ in (\ref{KM}) have been studied extensively in survival analysis. One attractive feature is that $\hat{F}_{T}(t)-F_{T}(t)$ can be approximated by an average of independent and identically distributed random variables with mean zero.\footnote{
Another appealing feature is the strong approximation for $\sqrt{n}(\hat{F}_{T}-F_{T})$ by a sequence of Gaussian processes.
See for example \citet{BurkeCsoergoEtAl1988} and \citet{MajorRejto1988}.
}
We state this representation in the following proposition for completeness.

\begin{Pro}\label{PRO2}
Under Assumptions~\ref{D1}-\ref{D2} and~\ref{I1}, for any $\zeta<\tau=\inf\{t:F_{Y}(t)=1\}$ and $t\in[0,\zeta]$,
\begin{align*}
\sqrt{n}\left(\hat{F}_{T}(t)-F_{T}(t)\right)
=\frac{1}{\sqrt{n}}\sum_{i=1}^{n}\xi(Y_{i},\delta_{i};t)+R_{n}(t),
\end{align*}
where
\begin{align*}
\xi(y,\delta;t)=\left[1-F_{T}(t)\right]\left[\frac{\Ind{[y\leq t, \delta=1]}}{1-F_{Y}(y)}-\int_{0}^{y\wedge t}\frac{F^{\delta}_{Y}(\!\!\Myd u)}{(1-F_{Y}(u))^{2}}\right],
\end{align*}
and
\begin{align*}
\sup_{t\in [0,\zeta]}|R_{n}(t)|=\Smallop{1}.
\end{align*}
\qed
\end{Pro}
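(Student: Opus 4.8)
The statement is the classical i.i.d.\ representation of the product-limit (Kaplan--Meier) estimator, and the plan is to derive it by passing through the cumulative hazard and the product-integral (Duhamel) map. Write $\hat{F}_{Y}(y)=\frac{1}{n}\sum_{i=1}^{n}\Ind{[Y_{i}\leq y]}$ and $\hat{F}^{\delta}_{Y}(y)=\frac{1}{n}\sum_{i=1}^{n}\Ind{[Y_{i}\leq y,\delta_{i}=1]}$ for the empirical counterparts of $F_{Y}$ and $F^{\delta}_{Y}$, and let $\hat{\Lambda}_{n}(t)=\int_{0}^{t}\frac{\hat{F}^{\delta}_{Y}(\Myd u)}{1-\hat{F}_{Y}(u^{-})}$ denote the associated Nelson--Aalen estimator, so that $\hat{F}_{T}$ in (\ref{KM}) satisfies $1-\hat{F}_{T}(t)=\prod_{(0,t]}(1-\Myd\hat{\Lambda}_{n})$. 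Under Assumption~\ref{I1} one has $\Lambda_{T}(t)=\int_{0}^{t}\frac{F^{\delta}_{Y}(\Myd u)}{1-F_{Y}(u)}$, and since $T$ is absolutely continuous $\Lambda_{T}$ is continuous with $1-F_{T}(t)=\prod_{(0,t]}(1-\Myd\Lambda_{T})$. The Duhamel equation then gives, for $t\in[0,\zeta]$,
\begin{align*}
\hat{F}_{T}(t)-F_{T}(t)=\bigl(1-F_{T}(t)\bigr)\bigl(\hat{\Lambda}_{n}(t)-\Lambda_{T}(t)\bigr)+\bigl(1-F_{T}(t)\bigr)\int_{0}^{t}\frac{F_{T}(u)-\hat{F}_{T}(u^{-})}{1-F_{T}(u)}\Myd\bigl(\hat{\Lambda}_{n}-\Lambda_{T}\bigr)(u);
\end{align*}
equivalently, $\hat{F}_{T}-F_{T}$ equals the Hadamard derivative at $\Lambda_{T}$ of the product-integral map (which is pointwise multiplication by $1-F_{T}$, using the continuity of $\Lambda_{T}$) applied to $\hat{\Lambda}_{n}-\Lambda_{T}$, plus the last, second-order term.

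The next step is to linearize $\hat{\Lambda}_{n}-\Lambda_{T}$ on $[0,\zeta]$. Adding and subtracting $\int_{0}^{t}\frac{\hat{F}^{\delta}_{Y}(\Myd u)}{1-F_{Y}(u)}$ and then, in the resulting cross term, replacing $1-\hat{F}_{Y}(u^{-})$ by $1-F_{Y}(u)$ and $\hat{F}^{\delta}_{Y}(\Myd u)$ by $F^{\delta}_{Y}(\Myd u)$ yields
\begin{align*}
\hat{\Lambda}_{n}(t)-\Lambda_{T}(t)=\frac{1}{n}\sum_{i=1}^{n}\biggl[\frac{\Ind{[Y_{i}\leq t,\delta_{i}=1]}}{1-F_{Y}(Y_{i})}-\int_{0}^{Y_{i}\wedge t}\frac{F^{\delta}_{Y}(\Myd u)}{(1-F_{Y}(u))^{2}}\biggr]+r_{n}(t),
\end{align*}
where the collapse of the deterministic integrals into the single term $-\int_{0}^{Y_{i}\wedge t}\frac{F^{\delta}_{Y}(\Myd u)}{(1-F_{Y}(u))^{2}}$ uses the elementary identity $\frac{1}{(1-F_{Y}(u))^{2}}-\frac{F_{Y}(u)}{(1-F_{Y}(u))^{2}}=\frac{1}{1-F_{Y}(u)}$, and $r_{n}$ collects the two substitution errors. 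Multiplying the bracketed i.i.d.\ term by $1-F_{T}(t)$ gives exactly $\xi(Y_{i},\delta_{i};t)$, so combining with the first display, $\sqrt{n}(\hat{F}_{T}(t)-F_{T}(t))=\frac{1}{\sqrt{n}}\sum_{i=1}^{n}\xi(Y_{i},\delta_{i};t)+R_{n}(t)$, where $R_{n}$ is $\sqrt{n}(1-F_{T})$ times $r_{n}$ plus $\sqrt{n}$ times the Duhamel integral in the first display.

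It remains to prove $\sup_{t\in[0,\zeta]}\Abs{R_{n}(t)}=\Smallop{1}$, which is the technical crux. Since $\zeta<\tau$ forces $\inf_{u\leq\zeta}(1-F_{Y}(u))>0$, every denominator above is bounded away from zero on $[0,\zeta]$ (and, with probability tending to one, so are its empirical versions), and one has the uniform rates $\sup_{u\leq\zeta}\Abs{\hat{F}_{Y}(u)-F_{Y}(u)}=\Bigop{n^{-1/2}}$, $\sup_{u\leq\zeta}\Abs{\hat{F}^{\delta}_{Y}(u)-F^{\delta}_{Y}(u)}=\Bigop{n^{-1/2}}$, together with the uniform consistency (indeed the $\Bigop{n^{-1/2}}$ rate) of $\hat{F}_{T}$ on $[0,\zeta]$. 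The first substitution error in $r_{n}$ carries a factor $(\hat{F}_{Y}(u^{-})-F_{Y}(u))^{2}$ and is thus $\Bigop{n^{-1}}$ uniformly; the second, $\int_{0}^{t}\frac{\hat{F}_{Y}(u^{-})-F_{Y}(u)}{(1-F_{Y}(u))^{2}}\Myd(\hat{F}^{\delta}_{Y}-F^{\delta}_{Y})(u)$, is a centered second-order $V$-statistic whose one-dimensional H\'ajek projections both vanish (by continuity of $F_{Y}$), so it is $\Bigop{n^{-1}}$ for each $t$ and $\Smallop{n^{-1/2}}$ uniformly once the $t$-uniformity is obtained from a Euclidean/Donsker-class argument; the Duhamel integral is handled by the same device, using that $F_{T}-\hat{F}_{T}(\cdot^{-})$ is uniformly $\Bigop{n^{-1/2}}$ and appears integrated against a signed measure that, after the linearization, reduces to another second-order term. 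I expect this uniform control of the second-order empirical-process remainders to be the main obstacle; it is exactly the content carried out in \citet{StuteWang1993} and \citet{MajorRejto1988}, and since the proposition is stated only for completeness the proof may simply invoke one of these.
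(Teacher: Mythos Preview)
Your proposal is correct: the paper's own proof is nothing more than a one-line citation to Theorem~1 of \citet{LoSingh1986} or Theorem~3 of \citet{Cai1998}, and the Duhamel/product-integral linearization you sketch is precisely the route those references take. You anticipated this at the end of your write-up, and indeed invoking Lo--Singh or Cai (rather than Stute--Wang or Major--Rejt\H{o}, which give the strong-approximation version) would match the paper exactly.
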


The influence function of $\hat{F}_{T}$ in Proposition~\ref{PRO2} is centered at zero.
Moreover, the precise rate of approximation error $\sup_{t\in [0,\zeta]}|R_{n}(t)|$ differs if different assumptions about the data generating process are imposed (cf. \citet{LoSingh1986}, \citet{Cai1998}, \citet{ChenLo1997}).

In addition to the representation of $\hat{F}_{T}$, a similar representation of $\hat{F}_{T|X}$ in~(\ref{CKM}) has been established in the case of a univariate covariate by \citet{Iglesias-PerezGonzalez-Manteiga1999}, and further extended to the case of multivariate covariates and dependent data by \citet{LiangUna-AlvarezEtAl2012}.
Since the counterfactual Kaplan-Meier estimator $\hat{F}_{T^{*}}$ in (\ref{CounterKM}) is constructed by taking average of $\hat{F}_{T|X}$ with respect to the empirical distribution of $X^{*}$, applying the representation of $\hat{F}_{T|X}$ allows us to approximate $\hat{F}_{T^{*}}$ by an average of independent and identically distributed random variables with mean zero.
To obtain the approximation of $\hat{F}_{T^{*}}$, we need the following assumptions about the kernel and bandwidth.\\[0.2cm]
\noindent
\textbf{Assumption K} (Kernel)\\
The kernel function $K:\mathbb{R}^{d}\rightarrow \mathbb{R}$ satisfies the following conditions.
\begin{enumerate}[label=K\arabic*]
\item \label{K1} $K$ is of bounded variation, vanishes outside $[-1,1]^{d}$, and satisfies $\int K(u)\Myd u=1$.
\item \label{K2} There is a positive integer $r\geq 2$ such that
\begin{align*}
\int \left(\prod_{\ell=1}^{d}u_{\ell}^{\lambda_{\ell}}\right)K(u)\Myd u=0
\end{align*}
for any $d$-dimensional vector $\lambda=(\lambda_{1},\ldots,\lambda_{d})^{\top}$ of nonnegative integers with $\sum_{\ell=1}^{d}\lambda_{\ell}\leq r-1$.
\item \label{K3} For $u\in[-1,1]^{d}$, $K(u)$ is $r$-times differentiable with respect to $u$ and the derivatives are uniformly continuous and bounded.
\item \label{K4} For $u\in[-1,1]^{d}$, $K(u)=K(|u|)$.
\end{enumerate}
\textbf{Assumption B} (Bandwidth)\\
The sequence $\{h_{n}\}_{n=1}^{\infty}$ of bandwidths satisfies the following conditions.
\begin{enumerate}[label=B\arabic*]
\item \label{B1} $h_{n}\rightarrow 0$.
\item \label{B2} $n^{1/2}\left(\frac{\log n}{nh_{n}^{d}}\right)^{3/4}\rightarrow 0$.
\item \label{B3} $n^{1/2}h_{n}^{r}\rightarrow 0$.
\end{enumerate}
These assumptions about the kernel and bandwidth are mild.
Assumption~\ref{K3} restricts the choice of kernels so that the estimator $\hat{F}_{T|X}(t|x)$ is $r$-times differentiable with respect to $x$ and these derivatives are uniformly continuous and bounded.
Assumptions~\ref{B2} implies the remainder term in the representation of $\hat{F}_{T|X}$ in Proposition~\ref{PRO3} below is of order less than $n^{-1/2}$.
Assumptions~\ref{K2}-\ref{K3} and \ref{B3} are imposed to ensure the bias terms of $\hat{F}_{T^{*}}$ in Proposition~\ref{PRO3} are also of order less than $n^{-1/2}$.
Note that a necessary condition to make Assumptions~\ref{B2} and \ref{B3} valid simultaneously is $3d<2r$.
Thus, we use a higher order kernel to construct $\hat{F}_{T^{*}}$ under Assumption~\ref{K2} if multivariate policy variables are of interest, that is, $d\geq 2$.
Assumption~\ref{K4} is valid if $K$ is a product kernel function $K(u)=\prod_{\ell=1}^{d}k_{\ell}(u_{\ell})$ and each $k_{\ell}$ is a univariate kernel function that is symmetric around zero.

Moreover, we need conditions about the support and smoothness of densities and distributions as follows.\\[0.2cm]
\textbf{Assumption SP} (Support)
\begin{enumerate}[label=SP\arabic*]
\item \label{SP1} The support of $X^{*}$ is the compact subset $J^{*}\equiv\prod_{\ell=1}^{d}[\underline{X_{\ell}^{*}},\overline{X_{\ell}^{*}}]$ of the interior of the support of $X$, say $J\equiv\prod_{\ell=1}^{d}[\underline{X_{\ell}},\overline{X_{\ell}}]$.
\item \label{SP2}
There exist positive numbers $u_{0}$ and $v_{0}$ such that $\inf\left\{m(x):x\in J_{v_{0}}^{*}\right\}\geq u_{0}$ where $J_{v_{0}}^{*}=\prod_{\ell=1}^{d}[\underline{X_{\ell}^{*}}-v_{0},\overline{X_{\ell}^{*}}+v_{0}]$.
\item \label{SP3} There exist positive numbers $\zeta^{*}$ and $v^{*}$ such that $\inf\{1-F_{Y|X}(\zeta^{*}|x):x\in J\}\geq v^{*}$.
\end{enumerate}
Assumption~\ref{SP1}, stronger than Assumption~\ref{I4}, requires the support of $X^{*}$ to be a proper subset of the support of $X$.
When $J^{*}$ is close to $J$, Assumption~\ref{SP2} is valid provided the density of $X$ on the boundary of $J^{*}$ is still bounded away from zero.
Assumption~\ref{SP3} requires the conditional survival function of $Y$ given $X$
is uniformly bounded away from zero on $[0,\zeta^{*}]\times J$; in addition, it implies $\zeta^{*}<\tau=\inf\{t:F_{Y}(t)=1\}$.\\[0.2cm]
\textbf{Assumption SM} (Smoothness)
\begin{enumerate}[label=SM\arabic*]
\item \label{SM1} The function $m(x)$ is $r$-times differentiable with respect to $x$ on the interior of $J$, and its derivatives are bounded and uniformly continuous.
\item \label{SM2} For all $(t,x)\in \mathbb{R}\times J_{v_{0}}^{*}$, the first $r$ partial derivative with respect to $x$ of $F_{T|X}(t|x)$, $F_{C|X}(t|x)$, $f_{T|X}(t|x)$ and $f_{C|X}(t|x)$ are bounded.
\item \label{SM3} For all $(t,x)\in \mathbb{R}\times J_{v_{0}}^{*}$, the first derivative with respect to $t$ of $f_{T|X}(t|x)$ and $f_{C|X}(t|x)$ are bounded.
\item \label{SM4} The function $m^{*}(x)$ is $r$-times differentiable with respect to $x$ on the interior of $J$, and its derivatives are bounded and uniformly continuous.\footnote{Let $m^{*}(x)=0$ if $x\notin J^{*}$.}
\item \label{SM5} Both $\int (\sup_{s\in[0,\zeta^{*}]}f_{T|X}(s|x))^{2}m(x)\Myd x$ and $\int [m^{*}(x)]^{2}/m(x)\Myd x$ are finite.
\end{enumerate}
Assumptions~\ref{SM1}-\ref{SM3} are imposed to obtain the representation of $\hat{F}_{T|X}$ in Proposition~\ref{PRO3}.
Similar conditions are used in \citet{Iglesias-PerezGonzalez-Manteiga1999} and \citet{LiangUna-AlvarezEtAl2012}.
As in \citet{Rothe2010}, we impose Assumption~\ref{SM4} to establish the representation of $\hat{F}_{T^{*}}$ in Proposition~\ref{PRO3} by standard kernel smoothing techniques.
Assumption~\ref{SM5} is technical and valid if the second moments of $\sup_{s\in[0,\zeta^{*}]}f_{T|X}(s|X)$ and $m^{*}(X)/m(X)$ are finite.


\begin{Pro}\label{PRO3}
\text{(i)}
Under Assumptions~\ref{D1}-\ref{D2}, \ref{I1}-\ref{I2}, \ref{K1}-\ref{K3}, \ref{B1}-\ref{B3}, \ref{SP1}-\ref{SP3}, and \ref{SM1}-\ref{SM3}, for $(t,x)\in [0,\zeta^{*}]\times J^{*}$,
\begin{displaymath}
\hat{F}_{T|X}(t|x)-F_{T|X}(t|x)=\sum_{i=1}^{n}\xi^{*}(Y_{i},\delta_{i};t,x)B_{n_{i}}(x)+r_{n}(t,x),
\end{displaymath}
where
\begin{displaymath}
\xi^{*}(y,\delta;t,x)=\left[1-F_{T|X}(t|x)\right]\left[\frac{\Ind{[y\leq t, \delta=1]}}{1-F_{Y|X}(y|x)}-\int_{0}^{y\wedge t}\frac{F^{\delta}_{Y|X}(\!\!\Myd u|x)}{(1-F_{Y|X}(u|x))^{2}}\right],
\end{displaymath}
and
\begin{displaymath}
\sup_{(t,x)\in [0,\zeta^{*}]\times J^{*}}|r_{n}(t,x)|=\Bigoas{\left(\frac{\log n}{nh^{d}}\right)^{3/4}}.
\end{displaymath}
\text{(ii)}
If in addition Assumptions~\ref{I3}-\ref{I4}, \ref{K4}, and~\ref{SM4}-\ref{SM5} hold, then for $t\in [0,\zeta^{*}]$,
\begin{align*}
& \sqrt{n}\left(\hat{F}_{T^{*}}(t)-F_{T^{*}}(t)\right)\\
= & \frac{1}{\sqrt{n}}\sum_{i=1}^{n}\left(F_{T|X}(t|X_{i}^{*})-F_{T^{*}}(t)\right)
+\frac{1}{\sqrt{n}}\sum_{i=1}^{n}\xi^{*}(Y_{i},\delta_{i};t,X_{i})\frac{m^{*}(X_{i})}{m(X_{i})}+R^{*}_{n}(t),
\end{align*}
where
\begin{displaymath}
\xi^{*}(y,\delta;t,x)=\left[1-F_{T|X}(t|x)\right]\left[\frac{\Ind{[y\leq t, \delta=1]}}{1-F_{Y|X}(y|x)}-\int_{0}^{y\wedge t}\frac{F^{\delta}_{Y|X}(\!\!\Myd u|x)}{(1-F_{Y|X}(u|x))^{2}}\right].
\end{displaymath}
and
\begin{displaymath}
\sup_{(t,x)\in [0,\zeta^{*}]}|R^{*}_{n}(t)|=\Smallop{1}.
\end{displaymath}
\qed
\end{Pro}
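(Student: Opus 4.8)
\noindent\emph{Proof strategy.}
The plan is to treat the two parts separately. Part~(i) is the H\'ajek-type i.i.d.\ representation of the conditional Kaplan-Meier estimator, obtained by transposing the arguments of \citet{Iglesias-PerezGonzalez-Manteiga1999} (univariate covariate) and \citet{LiangUna-AlvarezEtAl2012} (multivariate covariates) to the present i.i.d.\ setting; part~(ii) is the new ingredient, and once~(i) is available it reduces, after an exact algebraic simplification, to the asymptotics of a two-sample $U$-statistic of degree $(1,1)$.

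For part~(i), I would write $\hat F_{T|X}(t|x)=1-\exp\{-\hat\Lambda_n(t|x)\}$ with the Nelson-Aalen-type numerator $\hat\Lambda_n(t|x)=\sum_j\Ind{[Y_j\le t,\delta_j=1]}B_{nj}(x)/\sum_\ell\Ind{[Y_j\le Y_\ell]}B_{n\ell}(x)$, and first establish the uniform rate $\Bigoas{(\log n/nh^d)^{1/2}+h^r}$ over $[0,\zeta^*]\times J$ for the Nadaraya-Watson estimators of $F_{Y|X}(\cdot|x)$, of $1-F_{Y|X}(\cdot^-|x)$ (the denominator evaluated at the $Y_j$), and of $F^\delta_{Y|X}(\cdot|x)$: Assumption~\ref{K1} renders the relevant classes of indicator-times-kernel functions Euclidean, so the stochastic part is bounded by a maximal inequality, while Assumptions~\ref{B1}-\ref{B2}, \ref{SP2}-\ref{SP3} and \ref{SM1}-\ref{SM3} control the variance magnitude and the smoothing bias. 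A first-order expansion of $a/b$ about $(F^\delta_{Y|X},1-F_{Y|X})$ and of $z\mapsto 1-e^{-z}$ then produces the leading term $\sum_i\xi^*(Y_i,\delta_i;t,x)B_{ni}(x)$ and collects all second-order products into $r_n$, whose sup-norm is of the order $\Bigoas{(\log n/nh^d)^{3/4}}$ obtained in those references. For later use I record the identity $\Exp[\xi^*(Y,\delta;t,x)\mid X=x]=0$, which follows because $\Exp[\Ind{[Y\le t,\delta=1]}(1-F_{Y|X}(Y|x))^{-1}\mid X=x]$ and $\Exp[\int_0^{Y\wedge t}(1-F_{Y|X}(u|x))^{-2}F^\delta_{Y|X}(\Myd u|x)\mid X=x]$ both equal $\int_0^t(1-F_{Y|X}(u|x))^{-1}F^\delta_{Y|X}(\Myd u|x)$, by Fubini and the continuity of $Y$.

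For part~(ii), decompose
\begin{align*}
\sqrt n\bigl(\hat F_{T^*}(t)-F_{T^*}(t)\bigr)
=\frac1{\sqrt n}\sum_{i=1}^n\bigl(F_{T|X}(t|X_i^*)-F_{T^*}(t)\bigr)
+\frac1{\sqrt n}\sum_{i=1}^n\bigl(\hat F_{T|X}(t|X_i^*)-F_{T|X}(t|X_i^*)\bigr),
\end{align*}
the first sum being the first i.i.d.\ term of the claim, with mean-zero summands by Proposition~\ref{PRO1}. Substituting the part~(i) representation into the second sum, the averaged $r_n$ contributes at most $\sqrt n\sup_{(t,x)\in[0,\zeta^*]\times J^*}|r_n(t,x)|=n^{1/2}(\log n/nh_n^d)^{3/4}\Bigoas{1}=\Smallop1$ uniformly in $t$ by Assumption~\ref{B2}, leaving $\sqrt n\,n^{-1}\sum_i\hat\Psi_n(X_i^*;t)$ where $\hat\Psi_n(x;t)=\sum_j\xi^*(Y_j,\delta_j;t,x)B_{nj}(x)=\hat g_n(x;t)/\hat m(x)$ is the Nadaraya-Watson regression estimator, $\hat g_n(x;t)=(nh_n^d)^{-1}\sum_j\xi^*(Y_j,\delta_j;t,x)K((x-X_j)/h_n)$ and $\hat m$ the kernel density estimator of $m$. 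By the identity recorded above, $\hat g_n(\cdot;t)$ estimates the zero function with smoothing bias only $\Bigo{h_n^r}$ — here one uses $\phi_t(x,x)=0$ for $\phi_t(a,b)\equiv\Exp[\xi^*(Y,\delta;t,a)\mid X=b]$ together with Assumptions~\ref{K2}, \ref{K4}, \ref{SM2} — so $\sup_{(t,x)\in[0,\zeta^*]\times J^*}|\hat g_n(x;t)|=\Bigoas{(\log n/nh_n^d)^{1/2}+h_n^r}$, again via a maximal inequality over a Euclidean class (Assumption~\ref{K1}, monotonicity of $\xi^*$ in $t$). Writing $1/\hat m=1/m+\Bigo{|\hat m-m|/m^2}$ on $J^*$ and using $\inf_{J^*}m>0$ (Assumption~\ref{SP2}), I may replace $\hat\Psi_n(X_i^*;t)$ by $\hat g_n(X_i^*;t)/m(X_i^*)$ at a cost bounded by $\sqrt n\sup_{J^*}|\hat g_n(\cdot;t)|\cdot\sup_{J^*}(|\hat m-m|/m^2)=\Smallop1$ uniformly in $t$ by Assumptions~\ref{B2}-\ref{B3}. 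What remains is $\sqrt n$ times the two-sample $U$-statistic $\bar V_n(t)=n^{-2}\sum_{i,j}W_n(X_i^*,(Y_j,\delta_j,X_j);t)$, $W_n(x^*,(y,\delta,x);t)=h_n^{-d}\xi^*(y,\delta;t,x^*)K((x^*-x)/h_n)/m(x^*)$.

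The Hoeffding decomposition $\bar V_n=\theta_n+n^{-1}\sum_ig_n(X_i^*)+n^{-1}\sum_jk_n(Y_j,\delta_j,X_j)+D_n(t)$ is then evaluated by the substitution $x^*=x+h_nu$ and Taylor expansion. The $(Y,\delta,X)$-projection gives $k_n(y,\delta,x)+\theta_n=\int\xi^*(y,\delta;t,x+h_nu)K(u)\,m^*(x+h_nu)/m(x+h_nu)\,\Myd u=\xi^*(y,\delta;t,x)\,m^*(x)/m(x)+\Bigo{h_n^r}$ uniformly (Assumptions~\ref{K2}-\ref{K4}, \ref{SM2}, \ref{SM4}), so $\sqrt n(\theta_n+n^{-1}\sum_jk_n)=\frac1{\sqrt n}\sum_j\xi^*(Y_j,\delta_j;t,X_j)\,m^*(X_j)/m(X_j)+\Smallop1$ by Assumption~\ref{B3} — the second i.i.d.\ term. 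The $X^*$-projection is $g_n(x^*)=m(x^*)^{-1}\int\phi_t(x^*,x^*-h_nu)K(u)m(x^*-h_nu)\,\Myd u-\theta_n$; since $\phi_t(a,a)=0$ kills the leading term and the symmetric, $r$-th order kernel (Assumptions~\ref{K2}, \ref{K4}) kills every intermediate Taylor term, $\sup_{J^*}|g_n|=\Bigo{h_n^r}$ and $\sqrt n\,n^{-1}\sum_ig_n(X_i^*)=\Smallop1$ by Assumption~\ref{B3}. The degenerate part obeys $\Exp[D_n(t)^2]=n^{-2}\Exp[\widetilde W_n^2]=\Bigo{(n^2h_n^d)^{-1}}$, with $\Exp[W_n^2]=\Bigo{h_n^{-d}}$ using boundedness of $\xi^*$ on $[0,\zeta^*]\times J^*$ and $\int[m^*(x)]^2/m(x)\,\Myd x<\infty$ (Assumptions~\ref{SP3}, \ref{SM5}); hence $\sqrt n\,D_n(t)=\Bigop{(nh_n^d)^{-1/2}}=\Smallop1$ pointwise by Assumption~\ref{B2}. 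The main obstacle is to make $\sqrt n\,D_n(\cdot)$ — and the $\hat m$-replacement term — small uniformly in $t\in[0,\zeta^*]$: I handle this by observing that $t\mapsto\xi^*(y,\delta;t,x)$ is built from the monotone maps $\Ind{[y\le t]}$, $F_{T|X}(t|x)$ and $\int_0^{y\wedge t}(1-F_{Y|X}(u|x))^{-2}F^\delta_{Y|X}(\Myd u|x)$, so the class $\{\xi^*(\cdot;t,\cdot):t\in[0,\zeta^*]\}$ has an envelope bounded via Assumption~\ref{SP3} and polynomial uniform covering numbers; a maximal inequality for degenerate $U$-processes over such a class then gives $\sup_t|\sqrt n\,D_n(t)|=\Bigop{(\log n/nh_n^d)^{1/2}}=\Smallop1$ by Assumption~\ref{B2}, and an analogous empirical-process bound makes the $\hat m$-replacement estimate uniform. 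Finally, in the scenario $X^*=\pi(X)$ the diagonal terms $i=j$ add at most $\Bigop{(\sqrt n\,h_n^d)^{-1}}=\Smallop1$; collecting the pieces yields $\sup_{t\in[0,\zeta^*]}|R_n^*(t)|=\Smallop1$.
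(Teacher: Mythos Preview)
Your proof of part (i) follows the paper's route exactly (Taylor expansion of $1-e^{-z}$ combined with the uniform representation of \citet{LiangUna-AlvarezEtAl2012}). For part (ii), however, you take a genuinely different path. The paper uses a \emph{three}-term decomposition: beyond the i.i.d.\ sum in $X_i^*$, it splits the remainder into (B) $\sqrt n\{\Exp[\hat F_{T|X}(t|X^*)\mid\mathscr D_n]-\Exp[F_{T|X}(t|X^*)]\}$ and (C) the recentered empirical process $n^{-1/2}\sum_i\{(\hat F_{T|X}-F_{T|X})(t|X_i^*)-\Exp[(\hat F_{T|X}-F_{T|X})(t|X^*)\mid\mathscr D_n]\}$. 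Term (C) is dispatched via the Donsker-class argument of Lemma 19.24 in \citet{Vaart1998}; term (B) is an integral against $m^*$ (no $X_i^*$-sample enters), into which the part (i) representation is substituted, and a second-order Taylor expansion of $1/\hat m$ then reduces everything to \emph{one-sample} degenerate $U$-statistics in $W_i=(Y_i,\delta_i,X_i)$, controlled uniformly by Corollary 4 of \citet{Sherman1994}. By contrast, you keep the second sum $n^{-1/2}\sum_i(\hat F_{T|X}-F_{T|X})(t|X_i^*)$ intact, substitute part (i) directly, replace $\hat m$ by $m$ at the outset, and Hoeffding-decompose the resulting two-sample $U$-statistic, whose projections deliver the influence function in one stroke. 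Your route is more direct and avoids the Donsker detour entirely; the paper's route, on the other hand, cleanly segregates the randomness in $\{X_i^*\}$ from that in $\mathscr D_n$ and works exclusively with one-sample $U$-statistics, which transparently covers the case $X_i^*=\pi(X_i)$ (your diagonal remark is correct, but for the uniform degenerate-part bound you should state the maximal inequality for the symmetrized one-sample kernel in $Z_i=(Y_i,\delta_i,X_i,X_i^*)$ rather than a genuine two-sample kernel, since the two samples are not independent in that scenario). Both arguments ultimately rest on the same Euclidean-class structure and $U$-process maximal inequalities for uniformity in $t$.
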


\noindent
The influence function of $\hat{F}_{T^{*}}$ in Proposition~\ref{PRO3} has mean zero and can be decomposed into two components:
the first part arises from the sample variation in $X^{*}$, and the second part results from the estimate of $F_{T|X}(t|x)$.
The influence function of $\hat{F}_{T^{*}}$ is different from that of the counterfactual CDF estimator in \citet{Rothe2010} because we use the estimator $\hat{F}_{T|X}(t|x)$ in (\ref{CKM}) to recover the conditional CDF $F_{T|X}(t|x)$ in the presence of random censoring.

\subsection{Asymptotic properties}\label{Asymptotic}
Propositions~\ref{PRO2} and \ref{PRO3} show that the estimators $\hat{F}_{T}(t)$ and $\hat{F}_{T^{*}}(t)$ can be represented by the average of functions of independent and identically distributed random variables $(Y, \delta, X, X^{*})^{\top}$ plus asymptotic negligible terms.
The counterfactual estimator $\hat{F}_{T^{*}}$ is uniformly consistent for $F_{T^{*}}$ on $[0,\zeta^{*}]$ because the two classes $\{x\mapsto F_{T|X}(t|x):t\in[0,\zeta^{*}]\}$ and $\{(y,\delta,x)\mapsto\xi^{*}(y,\delta;t,x):t\in [0,\zeta^{*}]\}$ are both Euclidean under the assumptions imposed.
Moreover, for each $t\in[0,\zeta^{*}]$, the proposed estimator $\hat{F}_{T^{*}}(t)$ can avoid the curse of dimensionality, namely convergence at the usual parametric rate $\sqrt{n}$, even if the first-stage estimator $\hat{F}_{T|X}(t|x)$ converges to $F_{T|X}(t|x)$ at a rate slower than $\sqrt{n}$  for each $x\in J^{*}$.\footnote{
Details can be found in \citet{Dabrowska1989}, \citet{Iglesias-PerezGonzalez-Manteiga1999}, \citet{Iglesias-Perez2003}, and \citet{LiangUna-AlvarezEtAl2012}.
}

The representations of $\hat{F}_{T}$ and $\hat{F}_{T^{*}}$ further allow us to apply techniques in the literature on empirical processes to show that the random map
\begin{equation}\label{process}
t\mapsto \sqrt{n}\left(\hat{\mathbf{F}}(t)-\mathbf{F}(t)\right)
\end{equation}
converges weakly to a two dimensional centered Gaussian process, where $\hat{\mathbf{F}}\equiv(\hat{F}_{T^{*}},\hat{F}_{T})^{\top}$, $\mathbf{F}\equiv(F_{T^{*}},F_{T})^{\top}$, and $t=(t_{1},t_{2})^{\top}$.
Let $Z=(Y,\delta,X,X^{*})^{\top}$ and
\begin{align}\label{inf}
\Psi(t,Z)
=\begin{bmatrix}
\psi_{1}(t_{1},Z)\\
\psi_{2}(t_{2},Z)
\end{bmatrix}
=\begin{bmatrix}
\left(F_{T|X}(t_{1}|X^{*})-F_{T^{*}}(t_{1})\right)+\xi^{*}(Y,\delta;t_{1},X)\frac{m^{*}(X)}{m(X)}\\
\xi(Y,\delta;t_{2})
\end{bmatrix}
\end{align}
where $\xi$ and $\xi^{*}$ are defined in Propositions~\ref{PRO2} and~\ref{PRO3}, respectively.
We establish the weak convergence of $\hat{\mathbf{F}}$ as follows.

\begin{Thm}\label{THM1}
If Assumptions D, I, K, B, SP, and SM hold, then in $\mathbb{D}[0,\zeta^{*}]\times\mathbb{D}[0,\zeta^{*}]$,
\begin{align*}
\sqrt{n}\left(\hat{\mathbf{F}}(\cdot)-\mathbf{F}(\cdot)\right)\Rightarrow \mathbb{F}(\cdot),
\end{align*}
where $\mathbb{F}$ is a two dimensional centered Gaussian process with covariance function
$\Sigma(s,t)=\mathbb{E}\left(\Psi(s,Z)\Psi(t,Z)^{\top}\right)$ for every $s,t\in[0,\zeta^{*}]\times[0,\zeta^{*}]$.\qed
\end{Thm}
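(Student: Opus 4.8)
The plan is to reduce the joint convergence to a single empirical-process statement by invoking the asymptotically linear representations already in hand. Set $\mathbb{G}_{n}f\equiv n^{-1/2}\sum_{i=1}^{n}(f(Z_{i})-\Exp f(Z_{i}))$ for a function $f$ of $Z=(Y,\delta,X,X^{*})^{\top}$. Proposition~\ref{PRO2} gives $\sqrt{n}(\hat{F}_{T}(t_{2})-F_{T}(t_{2}))=\mathbb{G}_{n}\psi_{2}(t_{2},\cdot)+R_{n}(t_{2})$ with $\sup_{t_{2}\in[0,\zeta^{*}]}|R_{n}(t_{2})|=\Smallop{1}$, and Proposition~\ref{PRO3}(ii) gives $\sqrt{n}(\hat{F}_{T^{*}}(t_{1})-F_{T^{*}}(t_{1}))=\mathbb{G}_{n}\psi_{1}(t_{1},\cdot)+R_{n}^{*}(t_{1})$ with $\sup_{t_{1}\in[0,\zeta^{*}]}|R_{n}^{*}(t_{1})|=\Smallop{1}$; here $\psi_{1},\psi_{2}$ are as in~(\ref{inf}) and both are centered, as noted after Propositions~\ref{PRO2} and~\ref{PRO3} (for $\psi_{1}$ one uses $F_{T^{*}}(t)=\Exp F_{T|X}(t|X^{*})$ from Proposition~\ref{PRO1} together with $\Exp[\xi^{*}(Y,\delta;t,X)\mid X]=0$). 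Stacking, $\sqrt{n}(\hat{\mathbf{F}}(t)-\mathbf{F}(t))=\mathbb{G}_{n}\Psi(t,\cdot)+\rho_{n}(t)$ with $\sup_{t\in[0,\zeta^{*}]^{2}}\Norm{\rho_{n}(t)}=\Smallop{1}$, so by Slutsky's lemma it suffices to prove $\mathbb{G}_{n}\Psi\Rightarrow\mathbb{F}$ in $l^{\infty}[0,\zeta^{*}]\times l^{\infty}[0,\zeta^{*}]$, which in particular yields the stated convergence in $\mathbb{D}[0,\zeta^{*}]\times\mathbb{D}[0,\zeta^{*}]$ since the limit has continuous paths (the covariance is continuous because $F_{Y}$ is).

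For the weak convergence of $\mathbb{G}_{n}\Psi$ I would verify that $\mathcal{G}\equiv\{z\mapsto\Psi(t,z):t\in[0,\zeta^{*}]^{2}\}$ is $\mathbb{P}$-Donsker, using that the Donsker property is preserved under finite sums and under products of uniformly bounded classes. The coordinate $\psi_{2}(t_{2},\cdot)=\xi(\cdot;t_{2})$ is built from: the bounded deterministic factor $1-F_{T}(t_{2})$; the indicator class $\{(y,\delta)\mapsto\Ind{[y\le t_{2},\delta=1]}:t_{2}\in[0,\zeta^{*}]\}$, which is VC-subgraph, hence Donsker; the bounded factor $1/(1-F_{Y}(\cdot))$, bounded on $[0,\zeta^{*}]$ since Assumption~\ref{SP3} forces $\zeta^{*}<\tau$; and the map $t_{2}\mapsto\int_{0}^{y\wedge t_{2}}(1-F_{Y})^{-2}\Myd F_{Y}^{\delta}$, monotone and uniformly bounded, hence Donsker. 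The first piece of $\psi_{1}$, $t_{1}\mapsto F_{T|X}(t_{1}|\cdot)-F_{T^{*}}(t_{1})$, is monotone in $t_{1}$ and uniformly bounded, hence Euclidean with a bounded envelope; the second piece, $\xi^{*}(\cdot;t_{1},X)\,m^{*}(X)/m(X)$, has the same structure as $\xi$ but conditional on $x\in J^{*}$, where $1-F_{Y|X}(\cdot|x)\ge v^{*}>0$ by Assumption~\ref{SP3} makes the $\xi^{*}$ part uniformly bounded, while the weight $m^{*}(X)/m(X)$ is square-integrable by Assumption~\ref{SM5} and bounded on $J^{*}$ by Assumptions~\ref{SP2} and~\ref{SM4}; thus this class is Euclidean with a square-integrable envelope, hence Donsker. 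Consequently $\mathcal{G}$, the finite stacking of these, is Donsker.

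For the finite-dimensional laws, fix $s^{(1)},\dots,s^{(k)}\in[0,\zeta^{*}]^{2}$; the vectors $(\Psi(s^{(1)},Z_{i})^{\top},\dots,\Psi(s^{(k)},Z_{i})^{\top})^{\top}$, $i=1,\dots,n$, are i.i.d.\ with mean zero and finite covariance --- the only second moment requiring care is $\Exp\Norm{\xi^{*}(Y,\delta;t_{1},X)m^{*}(X)/m(X)}^{2}$, finite by Assumption~\ref{SM5} and the lower bound on $1-F_{Y|X}$ --- so the multivariate Lindeberg--L\'{e}vy CLT gives convergence of these laws to those of the centered Gaussian process $\mathbb{F}$ with $\Sigma(s,t)=\Exp(\Psi(s,Z)\Psi(t,Z)^{\top})$. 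Combining this with the asymptotic equicontinuity furnished by the Donsker property gives $\mathbb{G}_{n}\Psi\Rightarrow\mathbb{F}$, and feeding back the uniform $\Smallop{1}$ remainder completes the argument; joint convergence of the two coordinates needs no separate treatment, since they are components of one vector-valued function evaluated along the common i.i.d.\ sequence $\{Z_{i}\}$, so any cross-dependence between $X^{*}$ and $(Y,\delta,X)$ is automatically carried into $\Sigma$. I expect the main obstacle to be the Donsker verification for the $\xi^{*}$-class carrying the unbounded weight $m^{*}/m$: unlike the bounded classes driving $\hat{F}_{T}$, one must exhibit a genuine square-integrable envelope there, which is precisely the role of Assumption~\ref{SM5}; the uniform negligibility of the remainders, though equally indispensable, is already delivered by Propositions~\ref{PRO2}--\ref{PRO3} under Assumptions~B and~K.
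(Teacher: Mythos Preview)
Your proposal is correct and follows the same route as the paper: invoke Propositions~\ref{PRO2}--\ref{PRO3} to linearize, then establish that the class of influence functions $\{\Psi(t,\cdot):t\in[0,\zeta^{*}]^{2}\}$ is Donsker and conclude by Slutsky. The paper carries out the Donsker check via a separate lemma (Lemma~\ref{Euclidean}) showing the constituent classes $\{\xi(\cdot;t)\}$, $\{\xi^{*}(\cdot;t,\cdot)\}$, and $\{F_{T|X}(t|\cdot)\}$ are Euclidean through Lipschitz-in-$t$ bounds and Pakes--Pollard stability, then applies Lemma~2.14 of Pakes--Pollard (with Assumption~\ref{SM5} furnishing the square-integrable envelope, exactly as you anticipated) and van~der~Vaart's Theorem~19.14; your inline VC/monotone preservation argument is an equivalent packaging of the same verification.
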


Theorem~\ref{THM1} demonstrates that when the sample size is large, we can approximate the random map in~(\ref{process}) by the two dimensional centered Gaussian process $\mathbb{F}$ with the covariance function
\begin{align*}
\Sigma(s,t)=\mathbb{E}\left(\Psi(s,Z)\Psi(t,Z)^{\top}\right)
=\begin{bmatrix}
\Sigma_{11}(s_{1},t_{1})&\Sigma_{12}(s_{1},t_{2})\\
\Sigma_{21}(s_{2},t_{1})&\Sigma_{22}(s_{2},t_{2})
\end{bmatrix}
\end{align*}
where
\begin{align*}
&\Sigma_{11}(u,u')\\
=&\Exp\left[\left(F_{T|X}(u|X^{*})-F_{T^{*}}(u)\right)\left(F_{T|X}(u'|X^{*})-F_{T^{*}}(u')\right)\right]\\
&\hspace{0.3cm}+\Exp\left[\left(\frac{m^{*}(X)}{m(X)}\right)^{2}		
\left[1-F_{T|X}(u|X)\right]\left[1-F_{T|X}(u'|X)\right]
\int_{0}^{u\wedge u'}\frac{F^{\delta}_{Y|X}(\!\!\Myd \tilde{u}|X)}{(1-F_{Y|X}(\tilde{u}|X))^{2}}\right],\\[0.3cm]
&\Sigma_{22}(u,u')\\
=&\left[1-F_{T}(u)\right]\left[1-F_{T}(u')\right]\int_{0}^{u\wedge u'}\frac{F^{\delta}_{Y}(\!\!\Myd \tilde{u})}{(1-F_{Y}(\tilde{u}))^{2}},
\end{align*}
and
\begin{align*}
&\Sigma_{12}(u,u')=\Sigma_{21}(u',u)\\
=&\Exp\Bigg[\left(
\left[1-F_{T|X}(u|X)\right]\left[\frac{\Ind{[Y\leq u, \delta=1]}}{1-F_{Y|X}(Y|X)}-\int_{0}^{Y\wedge u}\frac{ F^{\delta}_{Y|X}(\!\!\Myd \tilde{u}|X)}{(1-F_{Y|X}(\tilde{u}|X))^{2}}\right]
\right)\\
&\hspace{3.2cm}\cdot\left(\frac{m^{*}(X)}{m(X)}  \right)
\left(\frac{\Ind{[Y\leq u',\delta=1]}}{1-F_{Y}(Y)}-\int_{0}^{Y\wedge u'}\frac{ F^{\delta}_{Y}(\!\!\Myd \tilde{u})}{(1-F_{Y}(\tilde{u}))^{2}}\right)
\Bigg]\left[1-F_{T}(u')\right]\\
&+\Exp\left[\left(F_{T|X}(u|X^{*})-F_{T^{*}}(u)\right)
\left(\frac{\Ind{[Y\leq u', \delta=1]}}{1-F_{Y}(Y)}-\int_{0}^{Y\wedge u'}\frac{ F^{\delta}_{Y}(\!\!\Myd \tilde{u})}{(1-F_{Y}(\tilde{u}))^{2}}\right)\right]\left[1-F_{T}(u')\right].
\end{align*}
The second term in the last line is zero if $X^{*}$ is independent of $(Y,\delta,X)$, which is expected to be valid in the analysis of the composition effect.
In contrast, if we consider a counterfactual manipulation with $X^{*}=\pi(X)$, then the second term should not be omitted in general.

We can make inference on the distribution policy effect $\triangle_{F}(t)$ by $\hat{\triangle}_{F}(t)\equiv \hat{F}_{T^{*}}(t)-\hat{F}_{T}(t)$ because $\sqrt{n}\left(\hat{\triangle}_{F}(\cdot)-\triangle_{F}(\cdot)\right)$ converges weakly to $(1,-1)\mathbb{F}(\cdot)$ in $\mathbb{D}[0,\zeta^{*}]\times\mathbb{D}[0,\zeta^{*}]$ by
Theorem~\ref{THM1} and the continuous mapping theorem.
The covariance function $\Sigma(s,t)$ can be estimated by replacing the unknown functions with associated consistent estimators.
For example, a plug-in estimator of $\Sigma_{11}(u,u')$ is
\begin{align*}
&\hat{\Sigma}_{11}(u,u')\\
=&\frac{1}{n}\sum_{i=1}^{n}\left[\hat{F}_{T|X}(u|X_{i}^{*})-\hat{F}_{T^{*}}(u)\right]
\left[\hat{F}_{T|X}(u'|X_{i}^{*})-\hat{F}_{T^{*}}(u')\right]\\
&+\frac{1}{n}\sum_{i=1}^{n}
\left(\frac{\hat{m}^{*}(X_{i})}{\hat{m}(X_{i})}\right)^{2}
\left[1-\hat{F}_{T|X}(u|X_{i})\right]\left[1-\hat{F}_{T|X}(u'|X_{i})\right]
\int_{0}^{u\wedge u'}\frac{\hat{F}^{\delta}_{Y|X}(\!\!\Myd \tilde{u}|X_{i})}{(1-\hat{F}_{Y|X}(\tilde{u}|X_{i}))^{2}}
\end{align*}
where $\hat{F}_{T^{*}}$ is defined in~(\ref{CounterKM}), $\hat{F}_{T|X}$ is defined in~(\ref{CKM}),
\begin{align*}
\hat{F}_{Y|X}(y|x)&\equiv\frac{\sum_{i=1}^{n}\Ind{[Y_{i}\leq y]}K\left(\frac{x-X_{i}}{h_{n}}\right)}
{\sum_{i=1}^{n}K\left(\frac{x-X_{i}}{h_{n}}\right)},\\
\hat{F}^{\delta}_{Y|X}(y|x)&\equiv\frac{\sum_{i=1}^{n}\Ind{[Y_{i}\leq y,\delta_{i}=1]}K\left(\frac{x-X_{i}}{h_{n}}\right)}
{\sum_{i=1}^{n}K\left(\frac{x-X_{i}}{h_{n}}\right)},\\
\hat{m}(x)&\equiv\frac{1}{nh_{n}^{d}}\sum_{i=1}^{n}K\left(\frac{x-X_{i}}{h_{n}}\right),\;\;\text{and}\\
\hat{m}^{*}(x)&\equiv\frac{1}{nh_{n}^{d}}\sum_{i=1}^{n}K\left(\frac{x-X^{*}_{i}}{h_{n}}\right).
\end{align*}

To analyze other counterfactual policy effects, we apply the functional delta method as follows.
\begin{Thm}\label{THM2}
Let $\nu$ be a functional mapping from a subset of $\mathbb{D}[0,\zeta^{*}]\times\mathbb{D}[0,\zeta^{*}]$ to some normed space $\mathcal{V}$.
Suppose that $\nu$ is Hadamard differentiable at $\mathbf{F}$ with derivative $\nu_{\mathbf{F}}'$.
Let $Z=(Y,\delta,X,X^{*})^{\top}$ and $\Psi^{\nu}(t,Z)=\nu_{\mathbf{F}}'(\Psi)(t,Z)$, where $\Psi=(\psi_{1},\psi_{2})^{\top}$ is defined in~(\ref{inf}).
Under the assumptions of Theorem~\ref{THM1}, we have that in $\mathcal{V}$,
\begin{displaymath}
\sqrt{n}\left(\nu(\hat{\mathbf{F}})(\cdot)-\nu(\mathbf{F})(\cdot)\right) \Rightarrow \nu_{\mathbf{F}}'(\mathbb{F})(\cdot)\equiv\mathbb{G}(\cdot),
\end{displaymath}
where $\mathbb{G}$ is a two dimensional centered Gaussian process with covariance function $\Sigma^{\nu}(s,t)=\mathbb{E}\left(\Psi^{\nu}(s,Z)\Psi^{\nu}(t,Z)^{\top}\right)$.\qed
\end{Thm}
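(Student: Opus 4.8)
The plan is to derive Theorem~\ref{THM2} from Theorem~\ref{THM1} by the functional delta method, and then to read off the law of the limit from the linear representations in Propositions~\ref{PRO2}--\ref{PRO3}. First I would recall that Theorem~\ref{THM1} delivers $\sqrt{n}(\hat{\mathbf{F}}-\mathbf{F})\Rightarrow\mathbb{F}$ in $\mathbb{D}[0,\zeta^{*}]\times\mathbb{D}[0,\zeta^{*}]$ under the uniform norm, with $\mathbb{F}$ a tight, centered Gaussian element concentrated on the closed linear span of the influence paths $t\mapsto\Psi(t,Z)$. One then checks the hypotheses of the empirical-process form of the functional delta method (the Hadamard-differentiability delta method of van der Vaart and Wellner): $\nu$ is Hadamard differentiable at $\mathbf{F}$ tangentially to the support of $\mathbb{F}$ with continuous linear derivative $\nu_{\mathbf{F}}'$ (assumed); $\hat{\mathbf{F}}$ lies in the domain of $\nu$ with inner probability tending to one, which on $[0,\zeta^{*}]$ follows from the uniform consistency of $\hat{F}_{T^{*}}$ and $\hat{F}_{T}$ noted after Theorem~\ref{THM1} together with the fact that $F_{T^{*}}(\zeta^{*})<1$ and $F_{T}(\zeta^{*})\le F_{Y}(\zeta^{*})<1$ by Assumption~\ref{SP3}; and the usual measurability/separability requirements hold because the relevant limit is Borel and tight. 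The delta method then yields $\sqrt{n}(\nu(\hat{\mathbf{F}})-\nu(\mathbf{F}))\Rightarrow\nu_{\mathbf{F}}'(\mathbb{F})$ in $\mathcal{V}$, together with the first-order expansion $\sqrt{n}(\nu(\hat{\mathbf{F}})-\nu(\mathbf{F}))=\nu_{\mathbf{F}}'\!\left(\sqrt{n}(\hat{\mathbf{F}}-\mathbf{F})\right)+\Smallop{1}$; set $\mathbb{G}:=\nu_{\mathbf{F}}'(\mathbb{F})$.

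It remains to identify $\mathbb{G}$. Since a Hadamard derivative is a continuous linear map, $\mathbb{G}$ is the image of the centered Gaussian element $\mathbb{F}$ under a continuous linear map, hence itself centered Gaussian; when $\nu$ maps into $\mathbb{D}[0,\zeta^{*}]\times\mathbb{D}[0,\zeta^{*}]$ (as for the cumulative hazard policy effect) it is again a two dimensional process. For the covariance I would combine the expansion above with the representation $\sqrt{n}(\hat{\mathbf{F}}-\mathbf{F})=n^{-1/2}\sum_{i=1}^{n}\Psi(\cdot,Z_{i})+\Smallop{1}$ supplied uniformly on $[0,\zeta^{*}]$ by Propositions~\ref{PRO2}--\ref{PRO3}: continuity and linearity of $\nu_{\mathbf{F}}'$ give $\sqrt{n}(\nu(\hat{\mathbf{F}})-\nu(\mathbf{F}))=n^{-1/2}\sum_{i=1}^{n}\nu_{\mathbf{F}}'(\Psi(\cdot,Z_{i}))+\Smallop{1}=n^{-1/2}\sum_{i=1}^{n}\Psi^{\nu}(\cdot,Z_{i})+\Smallop{1}$. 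The summands $\Psi^{\nu}(\cdot,Z_{i})=\nu_{\mathbf{F}}'(\Psi(\cdot,Z_{i}))$ are i.i.d.\ and, by linearity of $\nu_{\mathbf{F}}'$ and $\mathbb{E}\,\Psi(\cdot,Z)=0$, mean zero; they are square-integrable because $\nu_{\mathbf{F}}'$ is bounded and $\mathbb{E}\,\Norm{\Psi(\cdot,Z)}_{\infty}^{2}<\infty$ under the maintained assumptions. Hence the limit of the normalized i.i.d.\ sum is exactly the centered Gaussian process with covariance $\Sigma^{\nu}(s,t)=\mathbb{E}\!\left(\Psi^{\nu}(s,Z)\Psi^{\nu}(t,Z)^{\top}\right)$, which identifies $\mathbb{G}$ and completes the argument.

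I expect the main obstacle to be the interchange underlying the covariance computation, namely that the Hadamard derivative $\nu_{\mathbf{F}}'$, which acts on sample paths in $\mathbb{D}[0,\zeta^{*}]\times\mathbb{D}[0,\zeta^{*}]$, can legitimately be pushed through the i.i.d.\ average and through the weak limit; this needs $\nu_{\mathbf{F}}'$ to be a genuine bounded linear operator on the span of the influence paths (not merely a Gateaux derivative) so that it commutes with finite sums and is continuous at the $\Smallop{1}$ remainders, and it needs $\Psi^{\nu}(\cdot,Z)$ to be a well-defined square-integrable random element so that $\Sigma^{\nu}$ makes sense. A secondary, more routine matter is verifying the domain condition for $\hat{\mathbf{F}}$ and the measurability bookkeeping in the empirical-process framework; for the leading example $\nu:(F_{1},F_{2})\mapsto\left(\int_{[0,\cdot]}(1-F_{1}^{-})^{-1}\Myd F_{1},\int_{[0,\cdot]}(1-F_{2}^{-})^{-1}\Myd F_{2}\right)$ the Hadamard differentiability on pairs in $\mathbb{D}[0,\zeta^{*}]$ bounded away from $1$ is classical, so that case is covered once the restriction to $[0,\zeta^{*}]$ is imposed.
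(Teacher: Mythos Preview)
Your proposal is correct and follows exactly the same route as the paper: the paper's proof is a one-line invocation of the functional delta method (Theorem~20.8 of \citet{Vaart1998}) applied to the weak convergence in Theorem~\ref{THM1}. Your write-up simply fills in the details the paper omits---the domain check via Assumption~\ref{SP3}, the Gaussianity of $\mathbb{G}$ as a continuous linear image of $\mathbb{F}$, and the identification of $\Sigma^{\nu}$ by pushing $\nu_{\mathbf{F}}'$ through the i.i.d.\ representation---so there is no substantive difference in approach.
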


\noindent
Let $\hat{\Lambda}_{T}= \nu(\hat{F}_{T})$ and $\hat{\Lambda}_{T^{*}}= \nu(\hat{F}_{T^{*}})$ for the functional $\nu:F\mapsto\int_{[0,\cdot]} \frac{1}{1-F^{-}}\Myd F$.
In addition, let $\hat{\mathbf{\Lambda}}\equiv(\hat{\Lambda}_{T^{*}},\hat{\Lambda}_{T})^{\top}$ and $\mathbf{\Lambda}\equiv(\Lambda_{T^{*}},\Lambda_{T})^{\top}$.
Theorem~\ref{THM2} immediately implies the following corollary.
We can make inference on the cumulative hazard policy effect $\triangle_{\Lambda}(t)$ by $\hat{\triangle}_{\Lambda}(t)\equiv\hat{\Lambda}_{T^{*}}(t)-\hat{\Lambda}_{T}(t)$ because $\sqrt{n}\left(\hat{\triangle}_{\Lambda}(\cdot)-\triangle_{\Lambda}(\cdot)\right)$ converges weakly in $\mathbb{D}[0,\zeta^{*}]\times\mathbb{D}[0,\zeta^{*}]$ by the continuous mapping theorem.

\begin{Cor}\label{COR1}
Under the assumptions of Theorem~\ref{THM1},
\begin{align*}
\sqrt{n}\left(\hat{\mathbf{\Lambda}}(\cdot)-\mathbf{\Lambda}(\cdot)\right)
\Rightarrow &
\left[
\begin{array}{c}
\int_{0}^{\cdot}\frac{1}{1-F^{-}_{T^{*}}(u)}\mathbb{F}_{1}(\!\!\Myd u)
+\int_{0}^{\cdot}\frac{\mathbb{F}^{-}_{1}(u)}{\left(1-F^{-}_{T^{*}}(u)\right)^{2}}F_{T^{*}}(\!\!\Myd u) \\
\int_{0}^{\cdot}\frac{1}{1-F^{-}_{T}(u)}\mathbb{F}_{2}(\!\!\Myd u)
+\int_{0}^{\cdot}\frac{\mathbb{F}^{-}_{2}(u)}{\left(1-F^{-}_{T}(u)\right)^{2}}F_{T}(\!\!\Myd u)
\end{array} \right]
\equiv\mathbb{A}(\cdot)
\end{align*}
in $\mathbb{D}[0,\zeta^{*}]\times\mathbb{D}[0,\zeta^{*}]$.
The two dimensional process $\mathbb{A}$ is centered Gaussian with covariance function $\Sigma^{\Lambda}(s,t)=\mathbb{E}\left(\Psi^{\Lambda}(s,Z)\Psi^{\Lambda}(t,Z)^{\top}\right)$, where $\Psi^{\Lambda}(t,Z)=\left(\frac{\psi_{1}(t_{1},Z)}{1-F_{T^{*}}(t_{1})}, \frac{\psi_{2}(t_{2},Z)}{1-F_{T}(t_{2})}\right)^{\top}$, $Z=(Y,\delta,X,X^{*})^{\top}$, and $(\psi_{1},\psi_{2})^{\top}$ is defined in~(\ref{inf}).\qed
\end{Cor}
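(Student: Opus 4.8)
The plan is to derive Corollary~\ref{COR1} as an immediate consequence of Theorem~\ref{THM2} applied to the coordinatewise cumulative hazard map $(G_{1},G_{2})\mapsto\big(\nu(G_{1}),\nu(G_{2})\big)$ with $\nu:F\mapsto\int_{[0,\cdot]}\frac{1}{1-F^{-}}\Myd F$, so that the only real work is to check the hypothesis of Theorem~\ref{THM2} (Hadamard differentiability of this map at $\mathbf{F}$) and then to simplify the resulting influence function into the stated form. First I would verify that both coordinates of $\mathbf{F}$ are bounded away from $1$ on $[0,\zeta^{*}]$ and are continuous: since $Y=\min\{T,C\}$ with $T$ and $C$ conditionally independent given $X$ (Assumption~\ref{I2}), $1-F_{Y|X}(t|x)=\big(1-F_{T|X}(t|x)\big)\big(1-F_{C|X}(t|x)\big)\le 1-F_{T|X}(t|x)$, so Assumption~\ref{SP3} and monotonicity give $1-F_{T|X}(t|x)\ge v^{*}$ for all $(t,x)\in[0,\zeta^{*}]\times J$; averaging over the support $J^{*}\subseteq J$ of $X^{*}$ and over $X$ yields $\inf_{[0,\zeta^{*}]}\big(1-F_{T^{*}}\big)\ge v^{*}$ and $\inf_{[0,\zeta^{*}]}\big(1-F_{T}\big)\ge v^{*}$, and absolute continuity of $T$ makes $F_{T}$ and $F_{T^{*}}(t)=\Exp\big(F_{T|X}(t|X^{*})\big)$ continuous.

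Next I would invoke the classical Hadamard differentiability of the cumulative hazard transformation (the property underlying the functional delta method for the Nelson--Aalen estimator): on the set of distribution functions in $\mathbb{D}[0,\zeta^{*}]$ whose left-continuous version stays bounded away from $1$, the functional $\nu$ is Hadamard differentiable, tangentially to $C[0,\zeta^{*}]$, with derivative
\begin{align*}
\nu_{F}'(H)(t)=\int_{0}^{t}\frac{H(\!\!\Myd u)}{1-F^{-}(u)}+\int_{0}^{t}\frac{H^{-}(u)}{\big(1-F^{-}(u)\big)^{2}}F(\!\!\Myd u);
\end{align*}
because our map acts separately on the two coordinates, it is Hadamard differentiable at $\mathbf{F}$ with derivative $\nu_{\mathbf{F}}'(H_{1},H_{2})=\big(\nu_{F_{T^{*}}}'(H_{1}),\nu_{F_{T}}'(H_{2})\big)$, tangentially to $C[0,\zeta^{*}]\times C[0,\zeta^{*}]$. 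I would then note that the limit $\mathbb{F}=(\mathbb{F}_{1},\mathbb{F}_{2})^{\top}$ of Theorem~\ref{THM1} concentrates, with probability one, on $C[0,\zeta^{*}]\times C[0,\zeta^{*}]$: its covariance $\Sigma$ is assembled from the uniformly continuous maps $t\mapsto F_{T|X}(t|x)$, $t\mapsto F_{T}(t)$ and from the continuous Gaussian-martingale parts attached to $\xi$ and $\xi^{*}$, so the weak limit of Theorem~\ref{THM1} is in fact tight in the space of continuous functions. Applying Theorem~\ref{THM2} now gives $\sqrt{n}\big(\hat{\mathbf{\Lambda}}(\cdot)-\mathbf{\Lambda}(\cdot)\big)\Rightarrow\nu_{\mathbf{F}}'(\mathbb{F})(\cdot)$ in $\mathbb{D}[0,\zeta^{*}]\times\mathbb{D}[0,\zeta^{*}]$; substituting $(\mathbb{F}_{1},\mathbb{F}_{2})$ into $\nu_{\mathbf{F}}'$ reproduces precisely the process $\mathbb{A}$ in the statement, which is centered Gaussian because $\nu_{\mathbf{F}}'$ is continuous and linear.

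It remains to identify the covariance. By Theorem~\ref{THM2} the influence function of $\hat{\mathbf{\Lambda}}$ is $\Psi^{\Lambda}(t,Z)=\nu_{\mathbf{F}}'(\Psi)(t,Z)$ with $\Psi=(\psi_{1},\psi_{2})^{\top}$ as in~(\ref{inf}). For each fixed $Z$ the paths $u\mapsto\psi_{1}(u,Z)$ and $u\mapsto\psi_{2}(u,Z)$ are right-continuous and of bounded variation on $[0,\zeta^{*}]$ (an indicator jump plus an absolutely continuous integral), and they vanish at $u=0$ almost surely. Writing $\psi_{k}(u,Z)=\big(1-F(u)\big)\,b_{k}(u,Z)$ with $b_{k}\equiv\psi_{k}/(1-F)$ of bounded variation (the denominators being bounded away from $0$), the Leibniz rule for Lebesgue--Stieltjes integrals together with the continuity of $F\in\{F_{T^{*}},F_{T}\}$ gives $\frac{\psi_{k}(\!\!\Myd u,Z)}{1-F^{-}(u)}=b_{k}(\!\!\Myd u,Z)-\frac{b_{k}(u,Z)}{1-F(u)}F(\!\!\Myd u)$, so the two integrals defining $\nu_{F}'\big(\psi_{k}(\cdot,Z)\big)(t)$ telescope to $b_{k}(t,Z)-b_{k}(0,Z)=\psi_{k}(t,Z)/\big(1-F(t)\big)$. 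Hence $\Psi^{\Lambda}(t,Z)=\big(\psi_{1}(t_{1},Z)/(1-F_{T^{*}}(t_{1})),\ \psi_{2}(t_{2},Z)/(1-F_{T}(t_{2}))\big)^{\top}$, and therefore $\Sigma^{\Lambda}(s,t)=\Exp\big(\Psi^{\Lambda}(s,Z)\Psi^{\Lambda}(t,Z)^{\top}\big)$, as claimed. The step I expect to be the main obstacle is the middle one: pinning down the correct tangent space for the Hadamard differentiability of $\nu$ at a (merely) c\`{a}dl\`{a}g distribution function and confirming that the weak limit of Theorem~\ref{THM1} indeed lands in $C[0,\zeta^{*}]\times C[0,\zeta^{*}]$; once these are secured, the functional delta method and the telescoping identity are routine.
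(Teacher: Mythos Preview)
Your proposal is correct and follows essentially the same approach as the paper: verify that $1-F_{T^{*}}$ and $1-F_{T}$ are bounded away from zero on $[0,\zeta^{*}]$, invoke Hadamard differentiability of the cumulative hazard map (the paper cites Lemma~20.14 of \citet{Vaart1998}), and apply Theorem~\ref{THM2}. Two minor remarks: the paper bounds $1-F_{T}(\zeta^{*})$ via Assumption~\ref{I1} (namely $1-F_{T}(\zeta^{*})\ge 1-F_{Y}(\zeta^{*})>0$) rather than via~\ref{I2} and~\ref{SP3} as you do, though your route is equally valid; and the paper simply asserts the form of $\Psi^{\Lambda}$, whereas your integration-by-parts telescoping argument (using continuity of $F_{T}$ and $F_{T^{*}}$ so that $\int G\,\mathrm{d}\psi_{k}+\int\psi_{k}^{-}\,\mathrm{d}G=\psi_{k}G\big|_{0}^{t}$ with $G=(1-F)^{-1}$) actually supplies the missing derivation.
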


\section{Monte Carlo simulation}\label{Simulation}
In this section, we evaluate the small-sample performance of the proposed estimator $\hat{F}_{T^{*}}$ and its associated counterfactual policy effects by Monte Carlo simulation.
We consider the following data generating process (DGP):
\begin{align*}
Y= & \min{\{T,C\}},\\
T= & 5-3X_{1}+2X_{2}+\varepsilon\cdot\sqrt{X_{1}^{2}+X_{2}^{2}},
\end{align*}
where the covariates $(X_{1}$, $X_{2})$ follow the Beta distribution with shape parameters $(2,2)$,
the unobserved heterogeneity $\varepsilon$ is exponentially distributed with mean 2, and $C$ is log-normally distributed with parameters $(2.5,1)$; additionally, $(X_{1}, X_{2}, \varepsilon, C)$ are mutually independent.
The censoring rate in this design is approximate $23.45\%$.
We study the policy intervention
\begin{align*}
(X^{*}_{1},X^{*}_{2})^{\top}=\pi(X_{1},X_{2})=0.05+0.9\cdot(X_{1},X_{2})^{\top}
\end{align*}
and this intervention does not affect $(C,\varepsilon)$.
The DGP and counterfactual policy are not meant to mimic any data set in empirical studies; instead, they are only used to illustrate the proposed method.

We consider the sample sizes $n=100, 200, 400$, and $800$.
The number of simulation replications is $S=1000$.
The criteria of evaluation include the mean integrated absolute error (MIAE) and the root mean integrated squared error (RMISE).\footnote{
For an estimator $\hat{f}$ of a generic real-valued function $f$, the mean integrated absolute error of $\hat{f}$ is
\begin{align*}
\text{MIAE}(\hat{f})=\Exp\left[\int |\hat{f}(u)-f(u)| \Myd u\right]
\end{align*}
and the root mean integrated squared error of $\hat{f}$ is
\begin{align*}
\text{RMISE}(\hat{f})=\sqrt{\Exp\left[\int |\hat{f}(u)-f(u)|^{2} \Myd u\right]}.
\end{align*}
}
The CDF and cumulative hazard estimators in this Monte Carlo study are calculated over the eqidistant grids $\{4.25,4.30,4.35,\ldots,8.10,8.15\}$, and the numerical integration in MIAE and RMISE is taken over $[4.25,8.15]$, where $4.25$ and $8.15$ are the $10\%$ and $90\%$ quantile of $T$, respectively.

We evaluate the estimation of the CDFs $(F_{T^{*}},F_{T})$ and the estimation of the cumulative hazard functions $(\Lambda_{T^{*}},\Lambda_{T})$.
We estimate $F_{T}$ by the unconditional Kaplan-Meier estimator $\hat{F}_{T}$ in~(\ref{KM}).
To estimate $F_{T^{*}}$, we use the proposed estimator $\hat{F}_{T^{*}}$ in~(\ref{CounterKM}) with the fourth order product kernel function $K(u_{1},u_{2})=k(u_{1})k(u_{2})$ where $k(u)=(15/32)(3-10u^{2}+7u^{4})\Ind{[|u|<1]}$ and the bandwidth $h_{n}=3n^{-1/7}$.\footnote{
Assumptions~\ref{K1}-\ref{K3} and \ref{B1}-\ref{B3} are satisfied under this choice of kernel function and bandwidth}
Table~\ref{CDF} shows that the MIAE and RMISE of $(\hat{F}_{T^{*}},\hat{F}_{T})$ shrink as the sample size increases.
\begin{table}[t]
\centering
\caption{Estimation of the CDFs}
\label{CDF}
\begin{tabular}{ccccc}
\toprule
MIAE & Unconditional CDF  & \multicolumn{3}{l}{\hspace{0.5cm} Counterfactual CDF}\\[0.2cm]
     & Kaplan-Meier (1958)    & Proposed & Oracle & Rothe (2010)\\[0.2cm]
$n$  & $\hat{F}_{T}$ & $\hat{F}_{T^{*}}$ & $\tilde{F}_{T^{*}}$ & $F^{\dagger}_{T^{*}}$\\[0.3cm]
100  & 0.1454 & 0.1482 & 0.1440 & 0.3414\\
200  & 0.1061 & 0.1089 & 0.1052 & 0.3385\\
400  & 0.0741 & 0.0761 & 0.0734 & 0.3392\\
800  & 0.0528 & 0.0547 & 0.0523 & 0.3409\\[0.3cm]
\midrule
RMISE & Unconditional CDF  & \multicolumn{3}{l}{\hspace{0.5cm} Counterfactual CDF}\\[0.2cm]
     & Kaplan-Meier (1958)    & Proposed & Oracle & Rothe (2010)\\[0.2cm]
$n$  & $\hat{F}_{T}$ & $\hat{F}_{T^{*}}$ & $\tilde{F}_{T^{*}}$ & $F^{\dagger}_{T^{*}}$\\[0.3cm]
100  & 0.0922 & 0.0942 & 0.0914 & 0.2019\\
200  & 0.0674 & 0.0693 & 0.0668 & 0.1950\\
400  & 0.0475 & 0.0489 & 0.0471 & 0.1903\\
800  & 0.0336 & 0.0349 & 0.0333 & 0.1888
\end{tabular}
\end{table}
The MIAE and RMISE of $\hat{F}_{T^{*}}$ halves as the sample size quadruples; namely, this estimator converges at the rate $\sqrt{n}$.
This confirms the theoretical analysis that the proposed estimator $\hat{F}_{T^{*}}$ does not suffer from the curse of dimensionality.
We also consider an oracle estimator $\tilde{F}_{T^{*}}$, which is the unconditional Kaplan-Meier estimator of $F_{T^{*}}$ if $Y^{*}=\min\{T^{*},C\}$ and $\delta^{*}=\Ind{[T^{*}\leq C]}$ are observed.
Surprisingly, this oracle estimator $\tilde{F}_{T^{*}}$ does not outweigh considerably the proposed estimator $\hat{F}_{T^{*}}$ in terms of MIAE and RMISE; however, $\tilde{F}_{T^{*}}$ is infeasible because $Y^{*}$ and $\delta^{*}$ are unobserved in practice.
Moreover, we consider Rothe's (\citeyear{Rothe2010}) counterfactual estimator $F^{\dagger}_{T^{*}}$, which is constructed under the assumption that data are not censored.\footnote{
Since the support of $(X^{*}_{1},X^{*}_{2})$ is a proper subset of the support of $(X_{1},X_{2})$, we construct Rothe's estimator based on the aforementioned fourth order kernel function and bandwidth $h_{n}=3n^{-1/7}$.
}
As shown in Table~\ref{CDF}, the neglect of censoring results in larger MIAE and RMISE of $F^{\dagger}_{T^{*}}$, compared with those of $\hat{F}_{T^{*}}$ and $\tilde{F}_{T^{*}}$.
Table~\ref{Hazard} reports the MIAE and RMISE of the estimated cumulative hazard functions
\begin{align*}
\hat{\Lambda}_{T}\equiv \nu(\hat{F}_{T}),\;\;
\hat{\Lambda}_{T^{*}}\equiv \nu(\hat{F}_{T^{*}}),\;\;
\tilde{\Lambda}_{T^{*}}\equiv \nu(\tilde{F}_{T^{*}}),\;\;\text{and}\;\;
\Lambda^{\dagger}_{T^{*}}\equiv \nu(F^{\dagger}_{T^{*}})
\end{align*}
where $\nu$ is the functional that maps $F$ to $-\log{(1-F)}$.
\begin{table}[t]
\centering
\caption{Estimation of the Cumulative Hazard Functions}
\label{Hazard}
\begin{tabular}{ccccc}
\toprule
MIAE & Unconditional CDF  & \multicolumn{3}{l}{\hspace{0.5cm} Counterfactual CDF}\\[0.2cm]
     & Kaplan-Meier (1958)   & Proposed & Oracle & Rothe (2010)\\[0.2cm]
$n$  & $\hat{\Lambda}_{T}$ & $\hat{\Lambda}_{T^{*}}$ & $\tilde{\Lambda}_{T^{*}}$ & $\Lambda^{\dagger}_{T^{*}}$\\[0.3cm]
100  & 0.5382 & 0.5359 & 0.5491 & 1.2047\\
200  & 0.3870 & 0.3945 & 0.3949 & 1.1419\\
400  & 0.2661 & 0.2736 & 0.2712 & 1.1236\\
800  & 0.1862 & 0.1944 & 0.1900 & 1.1227\\[0.3cm]
\midrule
RMISE & Unconditional CDF  & \multicolumn{3}{l}{\hspace{0.5cm} Counterfactual CDF}\\[0.2cm]
     & Kaplan-Meier (1958)    & Proposed & Oracle & Rothe (2010)\\[0.2cm]
$n$  & $\hat{\Lambda}_{T}$ & $\hat{\Lambda}_{T^{*}}$ & $\tilde{\Lambda}_{T^{*}}$ & $\Lambda^{\dagger}_{T^{*}}$\\[0.3cm]
100  & 0.4124 & 0.3919 & 0.4236 & 0.7622\\
200  & 0.2861 & 0.2845 & 0.2938 & 0.6700\\
400  & 0.1943 & 0.1974 & 0.1990 & 0.6212\\
800  & 0.1332 & 0.1377 & 0.1366 & 0.6025
\end{tabular}
\end{table}
Similarly, the simulation results provide evidence that the proposed cumulative hazard function $\hat{\Lambda}_{T^{*}}$ converges at the rate $\sqrt{n}$. Moreover, $\hat{\Lambda}_{T^{*}}$ performs as well as the oracle estimator $\tilde{\Lambda}_{T^{*}}$.
The neglect of censoring, however, causes relatively large bias of $\Lambda^{\dagger}_{T^{*}}$.

\section{Conclusion}\label{Conclusion}
We have proposed a two-stage fully nonparametric estimator of the counterfactual CDF for a duration variable, which is subject to the random censoring.
Since the nonseparable heterogeneity is of unrestricted dimensionality and its marginal distribution is unspecified, the duration analysis in this paper would avoid several types of model misspecification in empirical studies.
The incorporation of covariates also enables researchers to evaluate the change of duration dependence in response to a counterfactual policy that changes exogenous covariates.

There are some directions of extension to this research.
First, we may relax the assumption of exogenous covariates by the control function approach.
Second, it would be important to establish the validity of a bootstrap method to construct a confidence band for the counterfactual policy effect.
Finally, the inclusion of time-varying covariates might be relevant in some empirical studies.

\begin{appendices}
\numberwithin{equation}{section}

\section{Technical Proofs}\label{AppendixA}
\subsection{Proof of Proposition~\ref{PRO1}}
\begin{proof}
Under Assumption~\ref{I1}, we can show that
\begin{align*}
F_{T}(t)=\int \Exp[\delta|Y=s]\exp\left\{\int_{0}^{s}\frac{1-\Exp[\delta|Y=y]}{1-F_{Y}(y)}F_{Y}(\!\!\Myd y)\right\}F_{Y}(\!\!\Myd s)
\end{align*}
for $t<\tau$.
See page 1604 of \citet{StuteWang1993}.
Since data on $\{(Y_{i},\delta_{i})\}_{i=1}^{n}$ are available, $F_{T}(t)$ is identified for $t\in[0,\tau)$.
Under Assumption~\ref{I2}, for $(t,x)\in [0,\tau)\times \mathbb{R}^{d}$, the identification of $F_{T|X}(t|x)$ is established by similar arguments in Lemma 25.74 of \citet{Vaart1998}.

The independence between $\varepsilon$ and $X$ implies
\begin{align*}
F_{T^{*}}(t)=\int \mathbb{P}(\varphi(x,\varepsilon)\leq t)F_{X^{*}}(\!\!\Myd x).
\end{align*}
Since $\varepsilon$ is also independent of $X^{*}$, which only takes values in a subset of support of $X$, we have
\begin{align*}
F_{T^{*}}(t)
=\int \mathbb{P}(\varphi(x,\varepsilon)\leq t|X=x)F_{X^{*}}(\!\!\Myd x)
=\Exp[F_{T|X}(t|X^{*})].
\end{align*}
\end{proof}

\subsection{Proof of Proposition~\ref{PRO2}}
\begin{proof}
See Theorem 1 of \citet{LoSingh1986} or Theorem 3 of \citet{Cai1998}.
\end{proof}

\subsection{Proof of Proposition~\ref{PRO3}}
\begin{proof}
(\textit{i})
Let $\hat{\Lambda}_{T|X}(t|x)=-\log{[1-\hat{F}_{T|X}(t|x)]}$.
By Theorem 2.1 of \citet{LiangUna-AlvarezEtAl2012},
\begin{align*}
\sup_{(t,x)\in[0,\zeta^{*}]\times J^{*}} |\hat{\Lambda}_{T|X}(t|x)-\Lambda_{T|X}(t|x)|=\Smalloas{1},
\end{align*}
where $\Lambda_{T|X}(t|x)=-\log{[1-F_{T|X}(t|x)]}$.
Hence, with probability one, $\hat{\Lambda}_{T|X}(t|x)$ is well defined on $[0,\zeta^{*}]\times J^{*}$ for $n$ large.
Taylor series expansion yields
\begin{align*}
&\hat{F}_{T|X}(t|x)-F_{T|X}(t|x)=[1-F_{T|X}(t|x)]\left[\hat{\Lambda}_{T|X}(t|x)-\Lambda_{T|X}(t|x)\right]\\
&\hspace{5cm}+\Bigo{\sup_{(t,x)\in [0,\zeta^{*}]\times J^{*}}\left[\hat{\Lambda}_{T|X}(t|x)-\Lambda_{T|X}(t|x)\right]^{2}}.
\end{align*}
The desired result follows from Theorems 2.1 and 2.3 of \citet{LiangUna-AlvarezEtAl2012}.\\[0.3cm]
(\textit{ii})
Let $\mathscr{D}_{n}\equiv \{Y_{i},\delta_{i},X_{i}\}_{i=1}^{n}$ and $X^{*}$ be a random variable that is independent of $\mathscr{D}_{n}$ and follows the same distribution as $X_{1}^{*}$ does.
Note that
\begin{align*}
&\sqrt{n}\left[\hat{F}_{T^{*}}(t)-F_{T^{*}}(t)\right]\\
=&\frac{1}{\sqrt{n}}\sum_{i=1}^{n}\left[\hat{F}_{T|X}(t|X_{i}^{*})-F_{T^{*}}(t)\right]\\
=&\frac{1}{\sqrt{n}}\sum_{i=1}^{n}\left[F_{T|X}(t|X_{i}^{*})-F_{T^{*}}(t)]\right]\\
&+\sqrt{n}\left\{\Exp[\hat{F}_{T|X}(t|X^{*})|\mathscr{D}_{n}]-\Exp[F_{T|X}(t|X^{*})]\right\}\\
&+\frac{1}{\sqrt{n}}\sum_{i=1}^{n}
\Big\{\left[\hat{F}_{T|X}(t|X_{i}^{*})-\Exp[\hat{F}_{T|X}(t|X^{*})|\mathscr{D}_{n}]\right]
-\left[F_{T|X}(t|X_{i}^{*})-\Exp[F_{T|X}(t|X_{i}^{*})]\right]\Big\}.
\end{align*}
Lemma~\ref{lemmaA1} shows that
\begin{align*}
&\frac{1}{\sqrt{n}}\sum_{i=1}^{n}
\Big\{\left[\hat{F}_{T|X}(t|X_{i}^{*})-\Exp[\hat{F}_{T|X}(t|X^{*})|\mathscr{D}_{n}]\right]
-\left[F_{T|X}(t|X_{i}^{*})-\Exp[F_{T|X}(t|X_{i}^{*})]\right]\Big\}\\
=&\Smallop{1}
\end{align*}
uniformly in $t\in[0,\zeta^{*}]$.
The result follows from Lemma~\ref{lemmaA2} that
\begin{align*}
\sqrt{n}\left\{\Exp[\hat{F}_{T|X}(t|X^{*})|\mathscr{D}_{n}]-\Exp[F_{T|X}(t|X^{*})]\right\}
=\frac{1}{\sqrt{n}}\sum_{i=1}^{n}\xi^{*}(Y_{i},\delta_{i};t,X_{i})\frac{m^{*}(X_{i})}{m(X_{i})}+r^{*}_{n}(t)
\end{align*}
and $\sup_{t\in[0,\zeta^{*}]}|r^{*}_{n}(t)|=\Smallop{1}$.
\end{proof}

\subsection{Proof of Theorem~\ref{THM1}}
\begin{proof}
The representations in Propositions~\ref{PRO2} and~\ref{PRO3} allow us to write $\sqrt{n}\left(\hat{\mathbf{F}}-\mathbf{F}\right)$ as a two-dimensional empirical process plus an asymptotically negligible term; that is, for each $t\in[0,\zeta^{*}]\times[0,\zeta^{*}]$, we have
\begin{align*}
\sqrt{n}\left(\hat{\mathbf{F}}(t)-\mathbf{F}(t)\right)
=\frac{1}{\sqrt{n}}\sum_{i=1}^{n}\left[\phi(t,Z_{i})-\Exp(\phi(t,Z_{i}))\right]
+\Smallop{1},
\end{align*}
where $Z_{i}=(Y_{i},\delta_{i},X_{i},X^{*}_{i})^{\top}$ for each $i\in\{1,2,\ldots,n\}$,
\begin{align*}
\phi(t,Z)=
\left[
\begin{array}{c}
F_{T|X}(t|X^{*})\\
0
\end{array} \right]
+
\left[
\begin{array}{c}
\xi^{*}(Y,\delta;t,X)\frac{m^{*}(X)}{m(X)} \\
\xi(Y,\delta;t)
\end{array} \right],
\end{align*}
and the two functions $\xi$ and $\xi^{*}$ are defined in Propositions~\ref{PRO2} and~\ref{PRO3}, respectively.
It follows from Lemma~\ref{Euclidean} that the classes $\{(y,\delta)\mapsto\xi(y,\delta;t):t\in [0,\zeta^{*}]\}$, $\{(y,\delta,x)\mapsto\xi^{*}(y,\delta;t,x):t\in [0,\zeta^{*}]\}$, and $\{x^{*}\mapsto F_{T|X}(t|x^{*}):t\in[0,\zeta^{*}]\}$ are all Euclidean.
Lemma 2.14 of \citet{PakesPollard1989} further implies that the class $\Phi\equiv \{\phi(t,\cdot): t\in[0,\zeta^{*}]\times[0,\zeta]\}$ is Euclidean under Assumption~\ref{SM5}.
Hence, the class $\Phi$ is Donsker by Theorem 19.14 of \citet{Vaart1998}, and the process
\begin{align*}
\sqrt{n}\left(\hat{\mathbf{F}}(\cdot)-\mathbf{F}(\cdot)\right)
=\frac{1}{\sqrt{n}}\sum_{i=1}^{n}\left[\phi(\cdot,Z_{i})-\Exp(\phi(\cdot,Z_{i}))\right]+\Smallop{1}
\end{align*}
converges weakly in $\mathbb{D}[0,\zeta^{*}]\times\mathbb{D}[0,\zeta^{*}]$ to a centered Gaussian process with covariance function
\begin{align*}
\Exp\left([\phi(s,Z)-\Exp(\phi(s,Z))][\phi(t,Z)-\Exp(\phi(t,Z))]^{\top}\right)
=\mathbb{E}\left(\Psi(s,Z)\Psi(t,Z)^{\top}\right)
\end{align*}
for each $s,t\in [0,\zeta^{*}]\times[0,\zeta^{*}]$.
\end{proof}

\subsection{Proof of Theorem~\ref{THM2}}
\begin{proof}
The proof is an application of the functional delta method, which is established in Theorem 20.8 of \citet{Vaart1998}.
\end{proof}

\subsection{Proof of Corollary~\ref{COR1}}
\begin{proof}
Let $\eta=\min\{1-F_{T^{*}}(\zeta^{*}), 1-F_{T}(\zeta^{*})\}$.
Assumptions~\ref{I1} implies that $1-F_{T}(\zeta^{*})\geq 1-F_{Y}(\zeta^{*})>0$; in addition, Assumptions~\ref{I2} and~\ref{SP3} imply that $1-F_{T^{*}}(\zeta^{*})=\mathbb{E}\left(1-F_{T|X}(\zeta^{*}|X^{*})\right)\geq v^{*} > 0$.
It follows that $\eta>0$.
Let $\mathbb{D}_{\eta}^{2}$ be the set of nondecreasing c\`{a}dl\`{a}g functions $(F_{1},F_{2})^{\top}$ such that
$F_{\ell}:[0,\zeta^{*}]\rightarrow \mathbb{R}$ with $F_{\ell}(0)=0$ and $1-F_{\ell}(\zeta^{*})\geq \eta$ for each $\ell\in\{1,2\}$.
Let $\nu$ be the functional from $\mathbb{D}_{\eta}^{2}$ to $\mathbb{D}[0,\zeta^{*}]\times\mathbb{D}[0,\zeta^{*}]$ such that $\nu(F_{1},F_{2})=(\nu_{0}(F_{1}),\nu_{0}(F_{2}))^{\top}$ where $\nu_{0}(F)(\cdot)=\int_{[0,\cdot]}\frac{F(\!\!\Myd u)}{1-F^{-}(u)}$.
From Lemma 20.14 of \citet{Vaart1998}, the functional $\nu$ is Hadamard differentiable at $\mathbf{F}=(F_{T^{*}},F_{T})^{\top}\in\mathbb{D}_{\eta}^{2}$.
Moreover, it can be shown that the Hadamard derivative is
\begin{align*}
\nu_{\mathbf{F}}'(S_{1},S_{2})
=
\left[
\begin{array}{c}
\int_{0}^{\cdot}\frac{1}{1-F^{-}_{T^{*}}(u)}S_{1}(\!\!\Myd u)
+\int_{0}^{\cdot}\frac{S^{-}_{1}(u)}{\left(1-F^{-}_{T^{*}}(u)\right)^{2}}F_{T^{*}}(\!\!\Myd u) \\
\int_{0}^{\cdot}\frac{1}{1-F^{-}_{T}(u)}S_{2}(\!\!\Myd u)
+\int_{0}^{\cdot}\frac{S^{-}_{2}(u)}{\left(1-F^{-}_{T}(u)\right)^{2}}F_{T}(\!\!\Myd u)
\end{array} \right]
\end{align*}
for $(S_{1},S_{2})^{\top}\in\mathbb{D}_{\eta}^{2}$.
Applying Theorem~\ref{THM2} yields that
\begin{align*}
\sqrt{n}\left(\hat{\mathbf{\Lambda}}(\cdot)-\mathbf{\Lambda}(\cdot)\right)
=\sqrt{n}\left(\nu(\hat{\mathbf{F}})(\cdot)-\nu(\mathbf{F})(\cdot)\right)
\Rightarrow
\nu_{\mathbf{F}}'(\mathbb{F})(\cdot)\equiv\mathbb{A}(\cdot)
\end{align*}
in $\mathbb{D}[0,\zeta^{*}]\times\mathbb{D}[0,\zeta^{*}]$.
\end{proof}

\subsection{Lemmas}
\begin{Lem}\label{lemmaA1}
Let $\mathscr{D}_{n}\equiv \{Y_{i},\delta_{i},X_{i}\}_{i=1}^{n}$ and $X^{*}$ be a random variable that is independent of $\mathscr{D}_{n}$ and follows the same distribution as $X_{1}^{*}$ does.
Under Assumptions~\ref{D1}-\ref{D2}, \ref{I1}-\ref{I2}, \ref{K1}-\ref{K3}, \ref{B1}-\ref{B3}, \ref{SP1}-\ref{SP3}, and \ref{SM1}-\ref{SM2},
\begin{align*}
&\sup_{t\in [0,\zeta^{*}]}\Bigg|\frac{1}{\sqrt{n}}\sum_{i=1}^{n}
\Big\{\left[\hat{F}_{T|X}(t|X_{i}^{*})-\Exp[\hat{F}_{T|X}(t|X^{*})|\mathscr{D}_{n}]\right]\\
&\hspace{5.6cm}-\left[F_{T|X}(t|X_{i}^{*})-\Exp[F_{T|X}(t|X^{*})]\right]\Big\}\Bigg|=\Smallop{1}.
\end{align*}
\end{Lem}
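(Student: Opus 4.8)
The plan is to recognize the displayed expression as a (conditional) empirical process in the counterfactual sample $\{X_{i}^{*}\}$ and to show it is asymptotically negligible by combining uniform consistency of $\hat{F}_{T|X}$ with a maximal inequality for a function class of uniformly controlled complexity. Let $P^{*}$ denote the distribution of $X^{*}$, write $\mathbb{P}_{n}^{*}g=n^{-1}\sum_{i=1}^{n}g(X_{i}^{*})$, and set $\Delta_{n}(t,x)\equiv\hat{F}_{T|X}(t|x)-F_{T|X}(t|x)$, which is $\mathscr{D}_{n}$-measurable. Since $\{X_{i}^{*}\}_{i=1}^{n}$ is independent of $\mathscr{D}_{n}$, we have $\Exp[\Delta_{n}(t,X^{*})\mid\mathscr{D}_{n}]=P^{*}\Delta_{n}(t,\cdot)$, so the bracketed centerings in the lemma are exactly the recentering $\mathbb{P}_{n}^{*}-P^{*}$, and the quantity inside the supremum equals $\sqrt{n}\,(\mathbb{P}_{n}^{*}-P^{*})\bigl(\Delta_{n}(t,\cdot)\bigr)$. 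It therefore suffices to show $\sup_{t\in[0,\zeta^{*}]}\bigl|\sqrt{n}\,(\mathbb{P}_{n}^{*}-P^{*})(\Delta_{n}(t,\cdot))\bigr|=\Smallop{1}$, which I would prove by conditioning on $\mathscr{D}_{n}$.

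The first ingredient is uniform consistency of the variant Beran estimator: under Assumptions~\ref{D1}-\ref{D2}, \ref{I1}-\ref{I2}, \ref{K1}-\ref{K3}, \ref{B1}-\ref{B3}, \ref{SP1}-\ref{SP3}, and~\ref{SM1}-\ref{SM2}---note that Assumption~\ref{SM3}, which underlies the linearization of Proposition~\ref{PRO3}(i), is \emph{not} imposed here---Theorem~2.1 of \citet{LiangUna-AlvarezEtAl2012}, or equivalently uniform consistency of the Nelson--Aalen cumulative-hazard estimator $\hat{\Lambda}_{T|X}$ composed with the Lipschitz map $a\mapsto1-e^{-a}$, gives $\rho_{n}\equiv\sup_{(t,x)\in[0,\zeta^{*}]\times J^{*}}|\Delta_{n}(t,x)|=\Smalloas{1}$. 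The second, and main, ingredient is a bracketing-entropy bound for $\mathcal{G}_{n}\equiv\{x\mapsto\Delta_{n}(t,x):t\in[0,\zeta^{*}]\}$ that is uniform in $\mathscr{D}_{n}$ on an event $\Omega_{n}$ with $\mathbb{P}(\Omega_{n})\to1$. The key observation is that $t\mapsto\hat{F}_{T|X}(t|x)$ is \emph{nondecreasing}: by~(\ref{CKM}) it equals $1-\prod_{j=1}^{n}\exp\{-c_{j}(x)\Ind{[Y_{j}\le t,\delta_{j}=1]}\}$ with nonnegative weights $c_{j}(x)$, and $t\mapsto F_{T|X}(t|x)$ is nondecreasing as well; hence both $\{x\mapsto\hat{F}_{T|X}(t|x):t\}$ and $\{x\mapsto F_{T|X}(t|x):t\}$ are $[0,1]$-valued families that are pointwise monotone in the scalar index $t$. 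The usual bracketing construction for a totally ordered family (grid the range of $t\mapsto\Norm{F_{T|X}(t|\cdot)}_{L_{1}(Q)}$ and bracket between consecutive members) yields $L_{2}(Q)$-bracketing numbers of order $\epsilon^{-2}$, uniformly over probability measures $Q$ \emph{and} over realizations of $\mathscr{D}_{n}$; taking differences of one member of each family, $\mathcal{G}_{n}$ then has $L_{2}(Q)$-bracketing numbers bounded by $C\epsilon^{-4}$ for a universal $C$. (Alternatively, one may invoke that $\{x\mapsto F_{T|X}(t|x):t\}$ is Euclidean, as already used in the proof of Theorem~\ref{THM1}, and control the $\hat{F}_{T|X}$ family by the monotonicity bound.)

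With these in hand, I would apply the bracketing maximal inequality for empirical processes, conditionally on $\mathscr{D}_{n}$: on $\Omega_{n}$,
\[
\Exp\Bigl[\sup_{t\in[0,\zeta^{*}]}\bigl|\sqrt{n}\,(\mathbb{P}_{n}^{*}-P^{*})(\Delta_{n}(t,\cdot))\bigr|\;\Big|\;\mathscr{D}_{n}\Bigr]
\;\lesssim\;\int_{0}^{\rho_{n}}\sqrt{1+\log N_{[\,]}(\epsilon,\mathcal{G}_{n},L_{2}(P^{*}))}\,\Myd\epsilon ,
\]
and the right-hand side is at most $\int_{0}^{\rho_{n}}\sqrt{1+\log(C\epsilon^{-4})}\,\Myd\epsilon$, which tends to zero almost surely because $\rho_{n}=\Smalloas{1}$ and $\epsilon\mapsto\sqrt{\log(1/\epsilon)}$ is integrable near $0$. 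Since $\mathbb{P}(\Omega_{n})\to1$, a conditional Markov inequality followed by dominated convergence then gives $\sup_{t\in[0,\zeta^{*}]}\bigl|\sqrt{n}\,(\mathbb{P}_{n}^{*}-P^{*})(\Delta_{n}(t,\cdot))\bigr|=\Smallop{1}$, which is the claim. I expect the main obstacle to be the second step: showing that the entropy bound for $\mathcal{G}_{n}$ is genuinely uniform in $\mathscr{D}_{n}$, i.e., that the complexity of $\mathcal{G}_{n}$ does not grow with $n$ despite $\hat{F}_{T|X}$ being assembled from the random jump times $\{Y_{j}:\delta_{j}=1\}$ and the random Nadaraya--Watson weights. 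It is precisely the monotonicity in $t$ of the product-limit/Nelson--Aalen construction that makes the bracketing count free of $\mathscr{D}_{n}$, so the argument neither needs nor can use the linear representation of Proposition~\ref{PRO3}(i).
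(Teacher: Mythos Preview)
Your plan rests on two premises that are not supplied by the hypotheses. First, the monotonicity claim: you assert that $t\mapsto\hat{F}_{T|X}(t|x)$ is nondecreasing because the increments $c_{j}(x)=B_{n_{j}}(x)\big/\sum_{\ell:Y_{\ell}\ge Y_{j}}B_{n_{\ell}}(x)$ are nonnegative. But Assumptions~\ref{K1}--\ref{K3} do not require $K\ge 0$, and in fact the compatibility of \ref{B2} and \ref{B3} forces $3d<2r$, so for $d\ge 2$ one must take $r\ge 4$ in \ref{K2}; a kernel of order four or higher cannot be nonnegative. With a higher-order kernel the Nadaraya--Watson weights $B_{n_{j}}(x)$---and hence the $c_{j}(x)$---can be negative, the estimated cumulative hazard $\hat{\Lambda}_{T|X}(t|x)=\sum_{j:Y_{j}\le t,\delta_{j}=1}c_{j}(x)$ need not be monotone in $t$, and your totally-ordered bracketing construction for $\{x\mapsto\hat{F}_{T|X}(t|x):t\}$ collapses. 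The uniform-in-$\mathscr{D}_{n}$ entropy bound you single out as the main obstacle is therefore not obtainable by the monotonicity route. Second, the conditioning step: you write ``since $\{X_{i}^{*}\}_{i=1}^{n}$ is independent of $\mathscr{D}_{n}$,'' but the lemma only introduces a \emph{single} auxiliary $X^{*}$ independent of $\mathscr{D}_{n}$ (to express the centering as $P^{*}\Delta_{n}(t,\cdot)$); it never assumes the sample $\{X_{i}^{*}\}$ is independent of $\mathscr{D}_{n}$. In the counterfactual-manipulation case $X_{i}^{*}=\pi(X_{i})$ treated throughout the paper, $X_{i}^{*}$ is $\mathscr{D}_{n}$-measurable, so conditional on $\mathscr{D}_{n}$ the quantity $\sqrt{n}(\mathbb{P}_{n}^{*}-P^{*})(\Delta_{n}(t,\cdot))$ is deterministic and your conditional maximal inequality is vacuous.

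The paper avoids both issues by using smoothness in $x$ rather than monotonicity in $t$: under \ref{K3} and \ref{SM2} each $x\mapsto\hat{\Gamma}(t|x;h_{n})$ takes values in the fixed class $\mathcal{C}_{r}(J^{*})$ of $r$-times boundedly differentiable functions, which is $P^{*}$-Donsker once $r>d/2$ (guaranteed by \ref{B2}--\ref{B3}). Lemma~19.24 of \citet{Vaart1998} then gives $\sqrt{n}(\mathbb{P}_{n}^{*}-P^{*})(\hat{\Gamma}(t|\cdot))=\Smallop{1}$ for each fixed $t$ via asymptotic equicontinuity of the \emph{fixed} Donsker class---an argument that does not require any independence between the random function $\hat{\Gamma}$ and the averaging data $\{X_{i}^{*}\}$. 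Uniformity in $t$ is handled separately by a stochastic-equicontinuity step. If you want to salvage a maximal-inequality approach, you would need an entropy control for $\mathcal{G}_{n}$ that does not rely on pointwise monotonicity (e.g., the same smoothness-in-$x$ embedding), and you would have to replace the conditional-on-$\mathscr{D}_{n}$ argument by an unconditional equicontinuity argument that tolerates dependence between $\hat{\Gamma}$ and $\{X_{i}^{*}\}$.
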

\begin{proof}
Let $\hat{\Gamma}(t|x;h_{n})=\hat{F}_{T|X}(t|x;h_{n})-F_{T|X}(t|x)$ for $(t,x)\in [0,\zeta^{*}]\times J^{*}$.
Our goal is to show that
\begin{align*}
&\sup_{t\in [0,\zeta^{*}]}\Bigg|\frac{1}{\sqrt{n}}\sum_{i=1}^{n}
\Big[\hat{\Gamma}(t|X_{i}^{*};h_{n})-\Exp[\hat{\Gamma}(t|X^{*};h_{n})|\mathscr{D}_{n}]\Big]\Bigg|=\Smallop{1}.
\end{align*}
For each $t\in [0,\zeta^{*}]$, we have
\begin{align*}
 \Exp\left[\left|\hat{\Gamma}(t|X^{*};h_{n})\right|^{2}|\mathscr{D}_{n}\right]
\leq & \sup_{(s,x)\in [0,\zeta^{*}]\times J^{*}}
\left|\hat{\Gamma}(s|x;h_{n})\right|^{2}
= \Bigoas{\frac{\log n}{nh^{d}}}
= \Smallop{1}
\end{align*}
by Theorem 2.1 of \citet{LiangUna-AlvarezEtAl2012}.
Let $\mathcal{C}_{r}(J^{*})$ be the class of real-valued functions defined on $J^{*}$ whose partial derivatives up to order $r$ exist and are bounded by some constant.
Example 19.9 of \citet{Vaart1998} shows that $\mathcal{C}_{r}(J^{*})$ is Donsker whenever $r>d/2$, which is guaranteed under Assumptions~\ref{B2} and~\ref{B3}.
Note that for each $t\in [0,\zeta^{*}]$, $\{x\mapsto\hat{\Gamma}(t|x;h_{n})\}_{n=1}^{\infty}$ is a sequence of random functions taking their values in $\mathcal{C}_{r}(J^{*})$ under Assumption~\ref{K3} and~\ref{SM2}.
It follows from Lemma 19.24 of \citet{Vaart1998} that for each $t\in [0,\zeta^{*}]$,
\begin{align*}
\frac{1}{\sqrt{n}}\sum_{i=1}^{n}
\left[\hat{\Gamma}(t|X_{i}^{*};h_{n})-\Exp[\hat{\Gamma}(t|X^{*};h_{n})|\mathscr{D}_{n}]\right]
=\Smallop{1}.
\end{align*}

It remains to show the uniform convergence in probability.
Since the dominated convergence theorem implies that with probability one,
\begin{align*}
\Exp\left[
\left|
\hat{\Gamma}(s|X^{*};h_{n})-\hat{\Gamma}(s'|X^{*};h_{n})
\right|^{2}
|\mathscr{D}_{n}
\right]\to 0
\end{align*}
as $s'\to s$,
the stochastic equicontinuity holds; specifically, we have for any $u>0$ there is a $v>0$ such that
\begin{align*}
&\limsup_{n\to\infty}\mathbb{P}\Bigg(
\sup_{|s-s'|<v}\Bigg|
\frac{1}{\sqrt{n}}\sum_{i=1}^{n}
\Big(\hat{\Gamma}(s|X_{i}^{*};h_{n})-\Exp[\hat{\Gamma}(s|X^{*};h_{n})|\mathscr{D}_{n}]\Big)\\
&\hspace{5cm}-\Big(\hat{\Gamma}(s'|X_{i}^{*};h_{n})-\Exp[\hat{\Gamma}(s'|X^{*};h_{n})|\mathscr{D}_{n}]\Big)
\Bigg|>u\Bigg)<u.
\end{align*}
Applying Theorem 21.9 of \citet{Davidson1994} yields the desired result.
\end{proof}

\begin{Lem}\label{lemmaA2}
Let $\mathscr{D}_{n}\equiv \{Y_{i},\delta_{i},X_{i}\}_{i=1}^{n}$ and $X^{*}$ be a random variable that is independent of $\mathscr{D}_{n}$ and follows the same distribution as $X_{1}^{*}$ does.
Suppose that the assumptions of Proposition~\ref{PRO3} hold.
Then, for $t\in [0,\zeta^{*}]$,
\begin{align*}
\sqrt{n}\left\{\Exp[\hat{F}_{T|X}(t|X^{*})|\mathscr{D}_{n}]-\Exp[F_{T|X}(t|X^{*})]\right\}
=\frac{1}{\sqrt{n}}\sum_{i=1}^{n}\xi^{*}(Y_{i},\delta_{i};t,X_{i})\frac{m^{*}(X_{i})}{m(X_{i})}+r_{n}^{*}(t)
\end{align*}
and $\sup_{t\in[0,\zeta^{*}]}|r_{n}^{*}(t)|=\Smallop{1}$.
\end{Lem}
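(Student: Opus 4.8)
The plan is to exploit that, because $X^{*}$ is independent of $\mathscr{D}_{n}$ and has density $m^{*}$,
\begin{align*}
& \sqrt{n}\left\{\Exp[\hat{F}_{T|X}(t|X^{*})\,|\,\mathscr{D}_{n}]-\Exp[F_{T|X}(t|X^{*})]\right\}\\
& \qquad =\sqrt{n}\int\left[\hat{F}_{T|X}(t|x)-F_{T|X}(t|x)\right]m^{*}(x)\Myd x,
\end{align*}
so that the representation in Proposition~\ref{PRO3}(i) can be substituted termwise. This writes the display as a leading term $\sqrt{n}\sum_{i=1}^{n}\int\xi^{*}(Y_{i},\delta_{i};t,x)B_{n_{i}}(x)m^{*}(x)\Myd x$ plus a remainder $\sqrt{n}\int r_{n}(t,x)m^{*}(x)\Myd x$. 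Since $m^{*}$ is a probability density on $J^{*}$, the remainder is bounded in absolute value, uniformly in $t$, by $\sqrt{n}\sup_{(s,x)\in[0,\zeta^{*}]\times J^{*}}|r_{n}(s,x)|=\sqrt{n}\,\Bigoas{(\log n/nh^{d})^{3/4}}$, which is $\Smalloas{1}$ by Assumption~\ref{B2}.

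For the leading term, I would write $nB_{n_{i}}(x)=h^{-d}K((x-X_{i})/h)/\hat{m}(x)$ and change variables $u=(x-X_{i})/h$, yielding $\frac{1}{\sqrt{n}}\sum_{i=1}^{n}W_{n_{i}}(t)$ with
\begin{align*}
W_{n_{i}}(t)=\int K(u)\,\xi^{*}(Y_{i},\delta_{i};t,X_{i}+hu)\,\frac{m^{*}(X_{i}+hu)}{\hat{m}(X_{i}+hu)}\Myd u.
\end{align*}
It then remains to show that $\frac{1}{\sqrt{n}}\sum_{i=1}^{n}\{W_{n_{i}}(t)-\xi^{*}(Y_{i},\delta_{i};t,X_{i})m^{*}(X_{i})/m(X_{i})\}=\Smallop{1}$ uniformly in $t\in[0,\zeta^{*}]$. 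I would split the summand by first replacing $\hat{m}$ by $m$ inside $W_{n_{i}}$. The resulting \emph{numerator} term equals $\int K(u)\,g_{i}(t,X_{i}+hu)\Myd u-g_{i}(t,X_{i})$ with $g_{i}(t,x)\equiv\xi^{*}(Y_{i},\delta_{i};t,x)m^{*}(x)/m(x)$; a Taylor expansion in $x$, the $r$th-order kernel condition (Assumption~\ref{K2}), and the uniform boundedness on $J^{*}_{v_{0}}$ of the first $r$ $x$-derivatives of $\xi^{*}(\cdot\,;t,\cdot)$, $m^{*}$, and $1/m$ (Assumptions~\ref{SM2}, \ref{SM4}, \ref{SP2}, \ref{SP3}) bound it by $\Bigo{h^{r}}$ uniformly in $i$ and $t$, so $\sqrt{n}$ times its average over $i$ is $\Smallo{1}$ by Assumption~\ref{B3}.

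The remaining \emph{denominator} term uses $1/\hat{m}-1/m=-(\hat{m}-m)/m^{2}+(\hat{m}-m)^{2}/(m^{2}\hat{m})$. Its quadratic piece is bounded by $\sqrt{n}\sup_{x\in J^{*}_{v_{0}}}|\hat{m}(x)-m(x)|^{2}=\Bigoas{\log n/(\sqrt{n}h^{d})}=\Smalloas{1}$, since Assumption~\ref{B2} implies $\sqrt{n}h^{d}/\log n\to\infty$. For the linear piece I would insert $\hat{m}-m=(\hat{m}-\Exp\hat{m})+(\Exp\hat{m}-m)$: the deterministic bias contributes $\Bigo{\sqrt{n}h^{r}}=\Smallo{1}$ (Assumptions~\ref{K2}, \ref{B3}), while the stochastic part becomes a conditionally degenerate second-order $V$-statistic $n^{-3/2}h^{-d}\sum_{i,j}\eta_{ij}(t)$ whose diagonal contribution is of order $(\sqrt{n}h^{d})^{-1}$ and whose off-diagonal part has $L^{2}$-norm of order $(nh^{d})^{-1/2}$, both $\Smallop{1}$ because $nh^{d}\to\infty$ under Assumption~\ref{B2}.

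Uniformity in $t$ throughout is obtained by combining these pointwise bounds with a stochastic-equicontinuity argument built on the Euclidean property of the classes $\{(y,\delta,x)\mapsto\xi^{*}(y,\delta;t,x):t\in[0,\zeta^{*}]\}$ and $\{x\mapsto F_{T|X}(t|x):t\in[0,\zeta^{*}]\}$ established in Lemma~\ref{Euclidean}. The step I expect to be the main obstacle is this denominator term: replacing the random $\hat{m}$ by the deterministic $m$ uniformly in $t$ without destroying the $n^{-1/2}$ rate is exactly what forces the bandwidth restriction in Assumption~\ref{B2}, which guarantees both $nh^{d}\to\infty$ and $\sqrt{n}h^{d}/\log n\to\infty$.
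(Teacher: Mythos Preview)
Your outline follows the same strategy as the paper: integrate the representation of Proposition~\ref{PRO3}(i) against $m^{*}$, bound the $r_{n}$ piece by Assumption~\ref{B2}, Taylor-expand $1/\hat m$ around $1/m$, and control the linear correction as a degenerate second-order $U$-process with uniformity coming from Euclidean-class arguments. The one substantive difference in organisation creates a gap you have not addressed.

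The paper first splits $\xi^{*}(Y_{i},\delta_{i};t,x)=\xi^{*}(Y_{i},\delta_{i};t,X_{i})+[\xi^{*}(Y_{i},\delta_{i};t,x)-\xi^{*}(Y_{i},\delta_{i};t,X_{i})]$ \emph{before} handling $\hat m$, so that in the $U$-statistic $\mathcal L(W_{i},W_{j};t,h)$ the factor $\xi^{*}_{i,t}=\xi^{*}(Y_{i},\delta_{i};t,X_{i})$ satisfies $\Exp[\xi^{*}_{i,t}\mid X_{i}]=0$. This gives \emph{two-sided} degeneracy ($\Exp[\mathcal L\mid W_{i}]=\Exp[\mathcal L\mid W_{j}]=0$), and the uniform $O_{p}(1)$ rate for the degenerate $U$-process is then read off from Corollary~4 of Sherman (1994), which is the specific tool the paper invokes rather than a generic equicontinuity argument. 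In your denominator term, by contrast, $\xi^{*}$ sits at the shifted point $X_{i}+hu$, and $\Exp[\xi^{*}(Y_{i},\delta_{i};t,X_{i}+hu)\mid X_{i}]$ is \emph{not} zero, so $\Exp[\eta_{ij}\mid W_{j}]\neq 0$ in general. Your claimed $L^{2}$ rate $(nh^{d})^{-1/2}$ for the off-diagonal sum presumes full degeneracy and is not justified as written. The fix is either (a) to do exactly what the paper does and pull $\xi^{*}$ back to $X_{i}$ first, generating an additional ``Term~II'' that is handled by the same machinery, or (b) to note that $\Exp[\xi^{*}(Y,\delta;t,x+hu)\mid X=x]=O(h)$ by smoothness, bound the non-vanishing $W_{j}$-projection explicitly, and show its H\'ajek contribution is $O_{p}(h)=o_{p}(1)$. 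Either route works, but you should make one of them explicit; as stated, ``conditionally degenerate'' is not enough.
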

\begin{proof}
Following the first part of Proposition~\ref{PRO3}, we obtain
\begin{align*}
 & \sqrt{n}\left\{\Exp[\hat{F}_{T|X}(t|X^{*})|\mathscr{D}_{n}]-\Exp[F_{T|X}(t|X^{*})]\right\}\\
= & \sqrt{n}\int_{J^{*}}\left[\hat{F}_{T|X}(t|x)-F_{T|X}(t|x)\right]m^{*}(x)\Myd x \\
= & \sqrt{n}\int_{J^{*}}\sum_{i=1}^{n}\xi^{*}(Y_{i},\delta_{i};t,X_{i})B_{n_{i}}(x)m^{*}(x)\Myd x \\
& +\sqrt{n}\int_{J^{*}}\sum_{i=1}^{n}
\left[\xi^{*}(Y_{i},\delta_{i};t,x)-\xi^{*}(Y_{i},\delta_{i};t,X_{i})\right]B_{n_{i}}(x)m^{*}(x)\Myd x\\
 & +\sqrt{n}\int_{J^{*}}r_{n}(t,x)m^{*}(x)\Myd x\\
= & \text{Term I}+\text{Term II}+\text{Term III}.
\end{align*}
Term III is asymptotically uniformly negligible because
\begin{align*}
\sup_{t\in[0,\zeta^{*}]}\left|\sqrt{n}\int_{J^{*}}r_{n}(t,x)m^{*}(x)\Myd x\right|
\leq& \sqrt{n}\left(\sup_{(t,x)\in[0,\zeta^{*}]\times J^{*}}\left|r_{n}(t,x)\right|\right)\int_{J^{*}}m^{*}(x)\Myd x\\
=&\Smallop{1}
\end{align*}
by Assumption~\ref{B2}.

We first show that $\text{Term I}$ can be approximated by
\begin{align*}
\frac{1}{\sqrt{n}}\sum_{i=1}^{n}\xi^{*}(Y_{i},\delta_{i};t,X_{i})\frac{m^{*}(X_{i})}{m(X_{i})}.
\end{align*}
For each $i\in\{1,2,\ldots,n\}$, we abbreviate by writing $\xi^{*}_{i,t}\equiv\xi^{*}(Y_{i},\delta_{i};t,X_{i})$, and we have $B_{n_{i}}(x)=K(\frac{x-X_{j}}{h_{n}})/nh_{n}^{d}\hat{m}(x)$.
Applying the second order Taylor expansion of $1/\hat{m}(x)$ around $1/m(x)$ yields
\begin{align*}
\text{Term I}= & \sqrt{n}\sum_{i=1}^{n}\xi^{*}_{i,t}\int_{J^{*}}B_{n_{i}}(x)m^{*}(x)\Myd x \\
=& \sqrt{n}\sum_{i=1}^{n}\xi^{*}_{i,t}\int_{J^{*}}\frac{1}{nh_{n}^{d}}K\left(\frac{x-X_{i}}{h_{n}}\right)  \frac{m^{*}(x)}{\hat{m}(x)} \Myd x \\
=&\sqrt{n}\sum_{i=1}^{n}\xi^{*}_{i,t}
\int_{J^{*}}\frac{1}{nh_{n}^{d}}K\left(\frac{x-X_{i}}{h_{n}}\right)\frac{m^{*}(x)}{m(x)}\Myd x\\
&+\sqrt{n}\sum_{i=1}^{n}\xi^{*}_{i,t}
\int_{J^{*}}\frac{1}{nh_{n}^{d}}K\left(\frac{x-X_{i}}{h_{n}}\right)
\frac{m^{*}(x)}{[m(x)]^{2}}[m(x)-\hat{m}(x)]\Myd x\\
&+\sqrt{n}\sum_{i=1}^{n}\xi^{*}_{i,t}
\int_{J^{*}}\frac{1}{nh_{n}^{d}}K\left(\frac{x-X_{i}}{h_{n}}\right)
\frac{m^{*}(x)}{[\tilde{m}(x)]^{3}}[\hat{m}(x)-m(x)]^{2}\Myd x\\
=& \text{Term I.a}+\text{Term I.b}+\text{Term I.c}
\end{align*}
where $\tilde{m}(x)$ is between $m(x)$ and $\hat{m}(x)$.
The last term is asymptotically uniformly negligible because
\begin{align*}
\sup_{t\in[0,\zeta^{*}]}|\xi^{*}_{i,t}|\leq \frac{1}{v^{*}}\left(1+\frac{1}{v^{*}}\right)
\end{align*}
by Assumption~\ref{SP3},
and
\begin{align*}
&|\text{Term I.c}|\\
\leq&
\left(\frac{2}{u_{0}}\right)^{2}\sup_{x\in J^{*}}|\hat{m}(x)-m(x)|^{2}
\frac{1}{n^{1/2}}\sum_{i=1}^{n}|\xi^{*}_{i,t}|
\int \left|K\left(z\right)\right|m^{*}(X_{i}+h_{n}z)\Myd z\\
=&\Bigo{n^{1/2}\sup_{x\in J^{*}}|\hat{m}(x)-m(x)|^{2}}\\
=&\Smallop{1}
\end{align*}
by Assumptions~\ref{SP2}, \ref{K1}-\ref{K3} and \ref{B1}-\ref{B3}.
In addition, Assumptions~\ref{SM1}, \ref{SM4}, and~\ref{K1}-\ref{K3} imply that the first term satisfies
\begin{align}\label{A0}
\text{Term I.a}
=\frac{1}{\sqrt{n}}\sum_{i=1}^{n}\xi^{*}_{i,t}\frac{m^{*}(X_{i})}{m(X_{i})}+\Smallop{1}.
\end{align}

It suffices to show that the second term
\begin{align}\label{A2}
\text{Term I.b}
\equiv \frac{1}{\sqrt{n}}\sum_{i=1}^{n}\xi^{*}_{i,t}\int_{J^{*}}\frac{1}{h_{n}^{d}}K\left(\frac{x-X_{i}}{h_{n}}\right)
\frac{m^{*}(x)}{[m(x)]^{2}}[m(x)-\hat{m}(x)]\Myd x
\end{align}
is asymptotically uniformly negligible.
Let $u(x)\equiv m^{*}(x)/m(x)$ and $v(x)\equiv m^{*}(x)/(m(x))^{2}$.
Assumptions~\ref{SM1}, \ref{SM4}, and~\ref{K1}-\ref{K3} imply that
\begin{align}\label{A1}
&\int_{J^{*}}\frac{1}{h_{n}^{d}}K\left(\frac{x-X_{i}}{h_{n}}\right)
\frac{m^{*}(x)}{[m(x)]^{2}}[m(x)-\hat{m}(x)]\Myd x\notag\\
=&\int_{J^{*}}\frac{1}{h_{n}^{d}}K\left(\frac{x-X_{i}}{h_{n}}\right)u(x) \Myd x
-\int_{J^{*}}\frac{1}{h_{n}^{d}}K\left(\frac{x-X_{i}}{h_{n}}\right)v(x)\hat{m}(x)\Myd x\notag  \\
=&u(X_{i})+\Bigop{h_{n}^{r}}
-\frac{1}{nh_{n}^{d}}\sum_{j=1}^{n}\int K\left(w\right)v(X_{i}+h_{n}w)
K\left(\frac{X_{i}-X_{j}}{h_{n}}+w\right)\Myd w\notag  \\
=&u(X_{i})-\frac{1}{nh_{n}^{d}}\sum_{j=1}^{n}v(X_{i})K\left(\frac{X_{i}-X_{j}}{h_{n}}\right)
-\frac{1}{n}\sum_{j=1}^{n}Q(X_{i},X_{j};h_{n})+\Smallop{n^{-1/2}},
\end{align}
where
\begin{align*}
&Q(x_{1},x_{2};h_{n})\\
=&\frac{1}{h_{n}^{d}}\int K(z)\left[v(x_{1}+h_{n}z)K\left(\frac{x_{1}-x_{2}}{h_{n}}+z\right)
-v(x_{1})K\left(\frac{x_{1}-x_{2}}{h_{n}}\right)\right]\Myd z.
\end{align*}
Let $\bar{Q}(x;h_{n})=\Exp\left[Q(x,X;h_{n})\right]$.
Substituting (\ref{A1}) into (\ref{A2}) yields
\begin{align}\label{A4}
&\text{Term I.b}\notag\\
= & \frac{1}{n^{3/2}}\sum_{i=1}^{n}\sum_{j=1}^{n}\xi^{*}_{i,t}
\left[u(X_{i})-\frac{1}{h_{n}^{d}}v(X_{i})K\left(\frac{X_{i}-X_{j}}{h_{n}}\right) \right]\notag\\
& -\frac{1}{n^{1/2}}\sum_{i=1}^{n}\xi^{*}_{i,t}\bar{Q}(X_{i};h_{n})
-\frac{1}{n^{3/2}}\sum_{i=1}^{n}\sum_{j=1}^{n}\xi^{*}_{i,t}\left[Q(X_{i},X_{j};h_{n})-\bar{Q}(X_{i};h_{n})\right]
+\Smallop{1}\notag\\
=&\text{Term I.b1}-\text{Term I.b2}-\text{Term I.b3}+\Smallop{1}.
\end{align}
For simplicity, we introduce further notation.
Let $\bar{m}(x;h_{n})\equiv\mathbb{E}\left[\frac{1}{h_{n}^{d}}K\left(\frac{x-X}{h_{n}}\right)\right]$.
For each $t\in[0,\zeta^{*}]$, let
\begin{align}\label{Lfun}
\mathcal{L}(W_{i},W_{j};t,h_{n})\equiv\xi^{*}_{i,t}v(X_{i})
\left[h_{n}^{d}\bar{m}(X_{i};h_{n})-K\left(\frac{X_{i}-X_{j}}{h_{n}}\right) \right],
\end{align}
and
\begin{align}\label{Mfun}
\mathcal{M}(W_{i},W_{j};t,h_{n})\equiv\xi^{*}_{i,t}h_{n}^{d}\left[Q(X_{i},X_{j};h_{n})-\bar{Q}(X_{i};h_{n})\right]
\end{align}
where $W_{i}=(Y_{i},\delta_{i},X_{i})^{\top}$ for each $i\in\{1,2,\ldots,n\}$.
We have
\begin{align}\label{A5}
 & \frac{1}{n^{3/2}}\sum_{i=1}^{n}\sum_{j=1}^{n}\xi^{*}_{i,t}
\left[u(X_{i})-\frac{1}{h_{n}^{d}}v(X_{i})K\left(\frac{X_{i}-X_{j}}{h_{n}}\right) \right]\notag\\
= & \frac{1}{h_{n}^{d}}\frac{1}{n^{3/2}}\sum_{i=1}^{n}\sum_{j=1}^{n}\mathcal{L}(W_{i},W_{j};t,h_{n})
+ \frac{1}{n^{1/2}}\sum_{i=1}^{n}\xi^{*}_{i,t}v(X_{i})\left[m(X_{i})-\bar{m}(X_{i};h_{n})\right].
\end{align}
Since $\mathcal{L}$ is uniformly bounded and $n^{1/2}h_{n}^{d}\rightarrow \infty$ by Assumptions~\ref{B1} and~\ref{B2}, we have
\begin{align*}
\sup_{t\in [0,\zeta^{*}]}
\left|\frac{1}{h_{n}^{d}}\frac{1}{n^{3/2}}\sum_{i=1}^{n}\mathcal{L}(W_{i},W_{i};t,h_{n})\right|
=\Smallop{1}.
\end{align*}
In addition, for each $t\in[0,\zeta^{*}]$ and $h>0$,
$\frac{1}{n(n-1)}\sum_{i=1}^{n}\sum_{j\neq i}\mathcal{L}(W_{i},W_{j},t,h)$
is a degenerate U statistic of order 2 because $\Exp[\mathcal{L}(W_{i},W_{j},t,h)|W_{i}]=\Exp[\mathcal{L}(W_{i},W_{j},t,h)|W_{j}]=0$ with probability one.
Lemma~\ref{Euclidean} shows that $\{\mathcal{L}(\cdot,\cdot;t,h):t\in [0,\zeta^{*}],h>0\}$ is an Euclidean class.
Corollary 4 of \citet{Sherman1994} implies that
\begin{align*}
\sup_{(t,h)\in [0,\zeta^{*}]\times(0,\infty)}\Bigg|\frac{1}{n}\sum_{i=1}^{n}\sum_{j\neq i}
\mathcal{L}(W_{i},W_{j};t,h)\Bigg|
=\Bigop{1}.
\end{align*}
It follows that
\begin{align}\label{A6}
&\sup_{t\in [0,\zeta^{*}]}\Bigg|
\frac{1}{h_{n}^{d}}\frac{1}{n^{3/2}}\sum_{i=1}^{n}\sum_{j=1}^{n}\mathcal{L}(W_{i},W_{j};t,h_{n})\Bigg|\notag\\
\leq &\frac{1}{n^{3/2}h_{n}^{d}}
\left(
\sup_{(t,h)\in [0,\zeta^{*}]\times(0,\infty)}\Bigg|\sum_{i=1}^{n}\sum_{j\neq i}
\mathcal{L}(W_{i},W_{j};t,h)\Bigg|+\sup_{t\in [0,\zeta^{*}]}
\left|\sum_{i=1}^{n}\mathcal{L}(W_{i},W_{i};t,h_{n})\right|
\right)
\notag\\
=&\Smallop{1}.
\end{align}
In addition, since $\bar{m}(x;h_{n})=m(x)+\Bigop{h_{n}^r}$ by kernel smoothing techniques, we have
\begin{align}\label{A7}
&\sup_{t\in [0,\zeta^{*}]}
\Bigg|\frac{1}{n^{1/2}}\sum_{i=1}^{n}\xi^{*}_{i,t}v(X_{i})\left[m(X_{i})-\bar{m}(X_{i};h_{n})\right]\Bigg|\notag\\
\leq &\frac{1}{n^{1/2}}\sum_{i=1}^{n}\sup_{t\in [0,\zeta^{*}]}\Big|\xi^{*}_{i,t}v(X_{i})\Big|
\Big|m(X_{i})-\bar{m}(X_{i};h_{n})\Big|\notag\\
= &\Bigop{n^{1/2}h_{n}^{r}}\notag\\
= &\Smallop{1}
\end{align}
by Assumption~\ref{B3}.
Substituting~(\ref{A6}) and~(\ref{A7}) into~(\ref{A5}), we obtain that Term I.b1 is asymptotically uniformly negligible.
In addition, after some simple but tedious calculation,
we show that under Assumptions~\ref{K1}-\ref{K2} and~\ref{SM1},
$\bar{Q}(X_{i};h_{n})=\Bigop{h_{n}^{r}}$.
It follows that Term I.b2 is asymptotically uniformly negligible; that is,
\begin{align}\label{A8}
\sup_{t\in [0,\zeta^{*}]}\Bigg|\frac{1}{n^{1/2}}\sum_{i=1}^{n}\xi^{*}_{i,t}\bar{Q}(X_{i};h_{n})\Bigg|
\leq &\Bigop{n^{1/2}h_{n}^{r}}=\Smallop{1}.
\end{align}

It remains to show that Term I.b3 is asymptotically uniformly negligible.
For each $t\in[0,\zeta^{*}]$ and $h>0$, $\frac{1}{n(n-1)}\sum_{i=1}^{n}\sum_{j\neq i}\mathcal{M}(W_{i},W_{j};t,h)$ is a degenerate U statistic of order 2 because $\Exp[\mathcal{M}(W_{i},W_{j},t,h)|W_{i}]=\Exp[\mathcal{M}(W_{i},W_{j},t,h)|W_{j}]=0$ with probability one.
Lemma~\ref{Euclidean} shows that $\{\mathcal{M}(\cdot,\cdot;t,h):t\in [0,\zeta^{*}],h\in(0,1)\}$ is an Euclidean class; thus, we have
\begin{align*}
\sup_{(t,h)\in [0,\zeta^{*}]\times(0,1)}
\Bigg|\frac{1}{n}\sum_{i=1}^{n}\sum_{j\neq i}\mathcal{M}(W_{i},W_{j};t,h)\Bigg|
=\Bigop{1}
\end{align*}
by Corollary 4 of \citet{Sherman1994}.
We also have
\begin{align*}
\sup_{t\in [0,\zeta^{*}]}
\left|\frac{1}{n^{3/2}h_{n}^{d}}\sum_{i=1}^{n}\mathcal{M}(W_{i},W_{i};t,h_{n})\right|
=\Smallop{1}
\end{align*}
because $\mathcal{M}$ is uniformly bounded and $n^{1/2}h_{n}^{d}\rightarrow \infty$ by Assumptions~\ref{B1} and~\ref{B2}.
It follows that Term I.b3 is asymptotically uniformly negligible; that is,
\begin{align}\label{A9}
&\sup_{t\in [0,\zeta^{*}]}\Bigg|\frac{1}{n^{3/2}}\sum_{i=1}^{n}\sum_{j=1}^{n}
\xi^{*}_{i,t}\left[Q(X_{i},X_{j};h_{n})-\bar{Q}(X_{i};h_{n})\right]\Bigg|\notag\\
\leq &\frac{1}{n^{3/2}h_{n}^{d}}\left(\sup_{(t,h)\in [0,\zeta^{*}]\times(0,1)}
\Bigg|\sum_{i=1}^{n}\sum_{j\neq i}\mathcal{M}(W_{i},W_{j};t,h)\Bigg|
+\sup_{t\in [0,\zeta^{*}]}
\left|\sum_{i=1}^{n}\mathcal{M}(W_{i},W_{i};t,h_{n})\right|\right)
\notag\\
=&\Smallop{1}.
\end{align}
Combining~(\ref{A4})-(\ref{A9}), we obtain that \text{Term I.b} is of order $\Smallop{1}$ uniformly in $t\in[0,\zeta^{*}]$.
Hence, we obtain
\begin{align*}
\text{Term I}
= \frac{1}{\sqrt{n}}\sum_{i=1}^{n}\xi^{*}_{i,t}\frac{m^{*}(X_{i})}{m(X_{i})}+\Smallop{1}
\end{align*}
where the remainder term is asymptotically uniformly negligible in $t\in[0,\zeta^{*}]$.

Following similar arguments, we can also show that Term II is asymptotically uniformly negligible.

\end{proof}

\begin{Lem}\label{Euclidean}
Suppose that the assumptions of Proposition~\ref{PRO3} hold.
\begin{enumerate}[(i)]
\item Let $\xi$ be the function defined in Proposition~\ref{PRO2}.
The class of functions
\begin{align*}
\{(y,\delta)\mapsto\xi(y,\delta;t):t\in [0,\zeta^{*}]\}
\end{align*}
is Euclidean for some envelope.
\item Let $\xi^{*}$ be the function defined in Proposition~\ref{PRO3}.
The class of functions
\begin{align*}
\{(y,\delta,x)\mapsto\xi^{*}(y,\delta;t,x):t\in [0,\zeta^{*}]\}
\end{align*}
is Euclidean for some envelope.
\item Let $\mathcal{L}$ be the function defined in~(\ref{Lfun}).
The class of functions
\begin{align*}
\{(w_{1},w_{2})\mapsto\mathcal{L}(w_{1},w_{2};t,h):t\in [0,\zeta^{*}], h>0\}
\end{align*}
is Euclidean for some envelope.
\item Let $\mathcal{M}$ be the function defined in~(\ref{Mfun}).
The class of functions
\begin{align*}
\{(w_{1},w_{2})\mapsto\mathcal{M}(w_{1},w_{2};t,h):t\in [0,\zeta^{*}], h\in(0,1)\}
\end{align*}
is Euclidean for some envelope.
\end{enumerate}
\end{Lem}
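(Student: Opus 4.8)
The plan is to obtain all four classes by applying the stability calculus for Euclidean classes (Lemma~2.14 of \citet{PakesPollard1989}; see also \citet{Sherman1994}) to a short list of primitives, always choosing constant envelopes so that ``Euclidean for some envelope'' reduces to polynomial uniform covering numbers. The primitives are: \textbf{(a)} any single bounded function is trivially Euclidean; \textbf{(b)} the indicator classes $\{(y,\delta)\mapsto\Ind{[y\le t,\delta=1]}:t\in[0,\zeta^{*}]\}$, $\{y\mapsto\Ind{[y\le t]}:t\}$, etc., are Euclidean because the underlying sets form VC classes; \textbf{(c)} by Assumption~\ref{K1}, $K$ is of bounded variation and compactly supported, so the dilation--translation family $\{u\mapsto K((u-a)/h):a\in\mathbb{R}^{d},h>0\}$ (and its pullbacks along fixed linear maps, e.g.\ $\{(x_{1},x_{2})\mapsto K((x_{1}-x_{2})/h):h>0\}$) is Euclidean with envelope $\sup\Abs{K}$ --- this is precisely where Assumption~\ref{K1} enters; \textbf{(d)} sums, differences, and products of Euclidean classes are Euclidean (with the sum/product envelopes), and every subclass of a Euclidean class is Euclidean; \textbf{(e)} if $\{g_{\theta}(z,w):\theta\}$ is Euclidean with envelope $G$ and $\mu$ is a fixed finite signed measure in $w$, then $\{z\mapsto\int g_{\theta}(z,w)\,\mu(\mathrm{d}w):\theta\}$ is Euclidean with envelope $z\mapsto\int\Abs{G(z,w)}\,\Abs{\mu}(\mathrm{d}w)$, since the $L_{1}(Q)$ covering numbers of the averaged class are dominated by the $L_{1}(Q\otimes\Abs{\mu})$ covering numbers of the original one; and \textbf{(f)} a family indexed by a bounded interval that is uniformly Lipschitz in the index when measured in $L_{1}(Q)$ uniformly over $Q$ --- in particular, one that reparametrizes through a bounded monotone functional of the index --- has polynomial uniform covering numbers and is Euclidean. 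The boundedness of every envelope will come from Assumption~\ref{SP3} (which forces $\zeta^{*}<\tau$ and bounds $1-F_{Y|X}(\cdot|x)$, hence $1-F_{Y}$, away from zero on $[0,\zeta^{*}]$), Assumptions~\ref{SP2} and~\ref{SM4} (so $m^{*}/m$ and $v=m^{*}/m^{2}$ are bounded and vanish outside $J^{*}$), and Assumptions~\ref{SM1}--\ref{SM5} and~\ref{K1}--\ref{K3} (bounding the relevant conditional survival functions, densities, and low-order derivatives on $J^{*}$ and $J_{v_{0}}^{*}$); the $x$-arguments may be restricted to $J$, respectively to a fixed neighbourhood of $J^{*}$, since in every application they are realized by $X_{i}$ or $X_{i}^{*}$ and the functions are weighted by --- or vanish with --- $m^{*}$.

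For parts~(i) and~(ii) I would write $\xi(y,\delta;t)=[1-F_{T}(t)]\,(A_{t}-B_{t})$ with $A_{t}(y,\delta)=\Ind{[y\le t,\delta=1]}\,(1-F_{Y}(y\wedge\zeta^{*}))^{-1}$ and $B_{t}(y)=\int_{0}^{y\wedge t}(1-F_{Y}(u))^{-2}\,F^{\delta}_{Y}(\mathrm{d}u)$, which agree with the original expressions for $t\in[0,\zeta^{*}]$. The family $\{1-F_{T}(t):t\in[0,\zeta^{*}]\}$ of constant functions is Euclidean because its range is a bounded subset of $[0,1]$; $\{A_{t}\}$ is a product of the indicator class~(b) and a fixed bounded function~(a); and $\{B_{t}\}$ is Euclidean by~(f), since $t\mapsto B_{t}$ factors through the bounded nondecreasing map $t\mapsto\int_{0}^{t}(1-F_{Y})^{-2}\,\mathrm{d}F^{\delta}_{Y}$ and $\Norm{B_{t}-B_{t'}}_{\infty}$ is bounded by the increment of that map. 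Differences and products, followed by restriction to the diagonal index, give~(i). Part~(ii) is the same after conditioning on $x$ and treating $x$ as a variable: the three blocks are $\{x\mapsto 1-F_{T|X}(t|x):t\}$, $\{(y,\delta,x)\mapsto\Ind{[y\le t,\delta=1]}(1-F_{Y|X}(y\wedge\zeta^{*}|x))^{-1}:t\}$, and $\{(y,x)\mapsto\int_{0}^{y\wedge t}(1-F_{Y|X}(u|x))^{-2}F^{\delta}_{Y|X}(\mathrm{d}u|x):t\}$; the first is Euclidean by~(f) because $\Norm{F_{T|X}(t|\cdot)-F_{T|X}(t'|\cdot)}_{L_{1}(Q)}\le\Abs{t-t'}\sup f_{T|X}$ with the supremum finite ($f_{T|X}$ is Lipschitz, hence bounded, on the relevant compact by Assumptions~\ref{SM2}--\ref{SM3}), the second is a product of an indicator class and a fixed bounded function (bounded via Assumption~\ref{SP3}), and the third is Euclidean by~(f) by the monotone-reparametrization argument used for $B_{t}$, now using boundedness of $f^{\delta}_{Y|X}=(1-F_{C|X})f_{T|X}$. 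Products, differences, and diagonalization complete~(ii).

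For parts~(iii) and~(iv) I would first lift the class from~(ii) to the product space, so that $\{(w_{1},w_{2})\mapsto\xi^{*}(y_{1},\delta_{1};t,x_{1}):t\}$ is Euclidean (adjoining dummy arguments does not change covering numbers), and then multiply by the fixed bounded function $v(x_{1})=m^{*}(x_{1})/m(x_{1})^{2}$. It remains to show that $\{(x_{1},x_{2})\mapsto h^{d}\bar m(x_{1};h)-K((x_{1}-x_{2})/h):h>0\}$ (for~(iii)) and $\{(x_{1},x_{2})\mapsto h^{d}[Q(x_{1},x_{2};h)-\bar Q(x_{1};h)]:h\in(0,1)\}$ (for~(iv)) are Euclidean; products over the joint index $(t,h)$ and diagonalization then finish. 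In~(iii), $K((x_{1}-x_{2})/h)$ belongs to the kernel class~(c), and $h^{d}\bar m(x_{1};h)=\int K((x_{1}-u)/h)\,m(u)\,\mathrm{d}u$ is its $m(u)\,\mathrm{d}u$-average, hence Euclidean by~(e); the unbounded range $h\in(0,\infty)$ causes no trouble because $h^{d}\bar m(\cdot;h)$ and the kernel class are bounded and tame as $h\to0$ and $h\to\infty$. In~(iv), cancelling the compensating $h^{d}$ gives $h^{d}Q(x_{1},x_{2};h)=\int_{[-1,1]^{d}}K(z)\big[v(x_{1}+hz)K((x_{1}-x_{2})/h+z)-v(x_{1})K((x_{1}-x_{2})/h)\big]\,\mathrm{d}z$, and $h^{d}\bar Q(x_{1};h)$ is its $m(u)\,\mathrm{d}u$-average; by~(e) it suffices that the integrand, as a class of functions of $(z,x_{1},x_{2})$ indexed by $h\in(0,1)$, is Euclidean, which follows from~(d) once one knows that $\{(z,x_{1})\mapsto v(x_{1}+hz):h\in(0,1)\}$ is Euclidean (by~(f): uniformly Lipschitz in $h$ for $z\in[-1,1]^{d}$, using the bounded gradient of $m^{*}/m^{2}$ from Assumptions~\ref{SM1} and~\ref{SM4}), that $v(x_{1})$ is a fixed bounded function, and that $\{(z,x_{1},x_{2})\mapsto K((x_{1}-x_{2})/h+z):h>0\}$ is Euclidean --- again via the bounded-variation property of $K$ (Assumption~\ref{K1}), each member being $K$ composed with a fixed linear map whose coefficients are rational in $h$, so that the Euclidean property is inherited by the argument underlying uniform-in-bandwidth kernel asymptotics.

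The step I expect to be the main obstacle is controlling the kernel-smoothed and convolution-type objects $\bar m(\cdot;h)$, $\bar Q(\cdot;h)$, and $Q(\cdot,\cdot;h)$: these are not themselves indicator or dilation--translation families and must be reduced, via the averaging lemma~(e) and the identity $K((x_{1}-x_{2})/h+z)=K(((x_{1}+hz)-x_{2})/h)$, to the bounded-variation kernel class and to bounded fixed or uniformly Lipschitz families --- the argument for Euclidean-ness of $h$-dependent compositions of $K$ is exactly the one underlying uniform-in-bandwidth kernel estimation, and Assumption~\ref{K1} is the hypothesis that makes it run. Everything else goes through only because each envelope can be taken constant, so the remaining work is a line-by-line audit of Assumptions~\ref{SP2}--\ref{SP3}, \ref{SM1}--\ref{SM5}, and~\ref{K1}--\ref{K3} to guarantee that $m$, $m^{*}$, the conditional survival functions, and the conditional densities and their low-order derivatives are bounded --- and bounded away from zero wherever they sit in a denominator --- on the compact sets that appear.
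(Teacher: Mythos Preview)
Your proposal is correct and follows essentially the same route as the paper: both build each class from indicator/VC primitives, bounded-variation kernel families, and Lipschitz-in-index families via the Pakes--Pollard/Sherman stability calculus, with averaging handled by Sherman's Lemma~5/Ghosal--Sen (your primitive~(e)). The minor differences in execution are that the paper handles the $\int_{0}^{y\wedge t}$ term via the identity $\min\bigl\{\int_{0}^{y},\int_{0}^{t}\bigr\}$ together with min-stability of Euclidean classes rather than your Lipschitz reparametrization, and in part~(iv) explicitly invokes Assumption~\ref{K4} alongside Lemma~22(i) of Nolan--Pollard for the shifted-kernel class $\{(x_{1},x_{2},z)\mapsto K((x_{1}-x_{2})/h+z):h\}$.
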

\begin{proof}
(\textit{i})
For any $t,t'\in[0,\xi^{*}]$, we have
\begin{align*}
\left|\int_{0}^{t}\frac{F^{\delta}_{Y}(\!\!\Myd s)}{(1-F_{Y}(s))^{2}}-\int_{0}^{t'}\frac{ F^{\delta}_{Y}(\!\!\Myd s)}{(1-F_{Y}(s))^{2}}\right|
\leq & \left(\frac{1}{1-F_{Y}(\zeta^{*})}\right)^{2}\left|F^{\delta}_{Y}(t)-F^{\delta}_{Y}(t')\right|\\
\leq & \left(\frac{1}{1-F_{Y}(\zeta^{*})}\right)^{2}\sup_{s\in[0,\zeta^{*}]}\left|\frac{\partial F^{\delta}_{Y}(s)}{\partial t}\right||t-t'|\\
\leq & \left(\frac{1}{1-F_{Y}(\zeta^{*})}\right)^{2}
\left(\sup_{s\in[0,\zeta^{*}]}f_{T}(s)\right)|t-t'|
\end{align*}
by Assumption~\ref{I1}.
Lemmas 2.13 and 2.14 of \citet{PakesPollard1989} imply that the class
\begin{align*}
\left\{(y,\delta)\mapsto \int_{0}^{y\wedge t}\frac{ F^{\delta}_{Y}(\!\!\Myd s)}{(1-F_{Y}(s))^{2}}:t\in[0,\zeta^{*}]\right\}
\end{align*}
is Euclidean for the constant envelope $\left[1+2\zeta^{*}\sup_{s\in[0,\zeta^{*}]}f_{T}(s)\right]/[1-F_{Y}(\zeta^{*})]^{2}$ under Assumption~\ref{SM5} because
\begin{align*}
\int_{0}^{y\wedge t}\frac{F^{\delta}_{Y}(\!\!\Myd s)}{(1-F_{Y}(s))^{2}}=\min\left\{\int_{0}^{y}\frac{F^{\delta}_{Y}(\!\!\Myd s)}{(1-F_{Y}(s))^{2}},\int_{0}^{t}\frac{F^{\delta}_{Y}(\!\!\Myd s)}{(1-F_{Y}(s))^{2}}\right\}.
\end{align*}
In addition, Lemma 19.15 of \citet{Vaart1998} implies that the class $\{y\mapsto\Ind{[y\leq t]}:t\in[0,\zeta^{*}]\}$ is Euclidean for a constant envelope because $\{(-\infty,t]:t\in\mathbb{R}\}$ is a VC class.
Note that we also have
\begin{align*}
|F_{T}(t)-F_{T}(t')|\leq \left(\sup_{s\in[0,\zeta^{*}]}f_{T}(s)\right)|t-t'|
\end{align*}
for any $t,t'\in[0,\xi^{*}]$.
Applying Lemmas 2.13 and 2.14 of \citet{PakesPollard1989} again yields that the class $\{(y,\delta)\mapsto\xi(y,\delta;t):t\in [0,\zeta^{*}]\}\}$ is Euclidean for some envelope.\\[0.3cm]
(\textit{ii})
For any $t,t'\in[0,\xi^{*}]$ and $x\in J$, we have
\begin{align*}
|F_{T|X}(t|x)-F_{T|X}(t'|x)|\leq \left(\sup_{s\in[0,\zeta^{*}]}f_{T|X}(s|x)\right)|t-t'|
\end{align*}
and
\begin{align*}
\left|\int_{0}^{t}\frac{F^{\delta}_{Y|X}(\!\!\Myd s|x)}{(1-F_{Y|X}(s|x))^{2}}-\int_{0}^{t'}\frac{ F^{\delta}_{Y|X}(\!\!\Myd s|x)}{(1-F_{Y|X}(s|x))^{2}}\right|
\leq & \left(\frac{1}{v^{*}}\right)^{2}\left|F^{\delta}_{Y|X}(t|x)-F^{\delta}_{Y|X}(t'|x)\right|\\
\leq & \left(\frac{1}{v^{*}}\right)^{2}
\sup_{s\in[0,\zeta^{*}]}\left|\frac{\partial F^{\delta}_{Y|X}(s|x)}{\partial t}\right||t-t'|\\
\leq & \left(\frac{1}{v^{*}}\right)^{2}
\left(\sup_{s\in[0,\zeta^{*}]}f_{T|X}(s|x)\right)|t-t'|
\end{align*}
by Assumption~\ref{I2}.
From Lemmas 2.13 and 2.14 of \citet{PakesPollard1989} and under Assumption~\ref{SM5}, the class $\{x\mapsto 1-F_{T|X}(t|x):t\in[0,\zeta^{*}]\}$ is Euclidean for the envelope $1+2\zeta^{*}\sup_{s\in[0,\zeta^{*}]}f_{T|X}(s|x)$, and the class
\begin{align*}
\left\{(y,\delta,x)\mapsto \int_{0}^{y\wedge t}\frac{ F^{\delta}_{Y|X}(\!\!\Myd s|x)}{(1-F_{Y|X}(s|x))^{2}}:t\in[0,\zeta^{*}]\right\}
\end{align*}
is Euclidean for the envelope $\left[1+2\zeta^{*}\sup_{s\in[0,\zeta^{*}]}f_{T|X}(s|x)\right]/(v^{*})^{2}$ because
\begin{align*}
\int_{0}^{y\wedge t}\frac{F^{\delta}_{Y|X}(\!\!\Myd s|x)}{(1-F_{Y|X}(s|x))^{2}}=\min\left\{\int_{0}^{y}\frac{F^{\delta}_{Y|X}(\!\!\Myd s|x)}{(1-F_{Y|X}(s|x))^{2}},\int_{0}^{t}\frac{F^{\delta}_{Y|X}(\!\!\Myd s|x)}{(1-F_{Y|X}(s|x))^{2}}\right\}.
\end{align*}
Since $\{(-\infty,t]:t\in\mathbb{R}\}$ is a VC class, Lemma 19.15 of \citet{Vaart1998} implies that the class $\{(y,\delta,x)\mapsto\frac{\Ind{[y\leq t]}\delta}{1-F_{Y|X}(y|x)}:t\in[0,\zeta^{*}]\}$ is Euclidean for a constant envelope.
Therefore, the class of functions $\{(y,x,\delta)\mapsto\xi^{*}(y,\delta;t,x):t\in [0,\zeta^{*}]\}$ is Euclidean for some envelope by Lemma 2.14 of \citet{PakesPollard1989}.\\[0.3cm]
(\textit{iii})
Let $g(x_{1},x_{2})=x_{1}-x_{2}$ and $\mathcal{K}=\{u\mapsto K(u/h):h>0\}$.
Assumption~\ref{K1} implies that $\mathcal{K}$ is a VC-subgraph class.
(See the discussion on page 911 of \citet{GineGuillou2002}.)
It follows from Lemma 2.6.18(vii) of \citet{VaartWellner1996} that the class $\mathcal{F}\circ g=\{(x_{1},x_{2})\mapsto K\left((x_{1}-x_{2})/h\right):h>0\}$ is a VC-subgraph class; thus $\mathcal{F}\circ g$ is Euclidean for a constant envelope.
Lemma 5 of \citet{Sherman1994} implies that the class $\{x\mapsto h^{d}\bar{m}(x;h):h>0\}$ is also Euclidean for a constant envelope. 
Since the class $\{(y,\delta,x)\mapsto\xi^{*}(y,\delta;t,x):t\in [0,\zeta^{*}]\}$ is Euclidean by part (\textit{ii}), the class $\{(w_{1},w_{2})\mapsto\mathcal{L}(w_{1},w_{2};t,h):t\in [0,\zeta^{*}], h>0\}$ is Euclidean for some envelope by Lemma 2.14 of \citet{PakesPollard1989}.\\[0.3cm]
(\textit{iv})
By part \textit{(ii)}, Lemma 5 of \citet{Sherman1994}, and Lemma 2.14 of \citet{PakesPollard1989}, it suffices to show that $\left\{(x_{1},x_{2})\mapsto h^{d}Q(x_{1},x_{2};h):h\in(0,1)\right\}$ is Euclidean.
Since there is a constant $c$ such that for $h_{1},h_{2}\in(0,1)$,
\begin{align*}
\sup_{x,z\in J}|v(x+h_{1}z)-v(x+h_{2}z)|\leq c|h_{1}-h_{2}|
\end{align*}
by Assumptions~\ref{SM1} and~\ref{SM4},
the class $\left\{(x,z)\mapsto v(x+hz):h\in(0,1)\right\}$ is Euclidean by Lemma 2.13 of \citet{PakesPollard1989}.
In addition, Lemma 22(i) of \citet{NolanPollard1987} imply that the class
\begin{align*}
\left\{(x_{1},x_{2},z)\mapsto K\left(\frac{x_{1}-x_{2}}{h}+z\right):h\in(0,1)\right\}
\end{align*}
is Euclidean by Assumptions~\ref{K1} and~\ref{K4}.
Let $\mathscr{U}$ be the measure on $[-1,1]^{d}$ associated with a uniform random variable on $[-1,1]^{d}$. Note that
\begin{align*}
&h^{d}Q(x_{1},x_{2})\\
=&\int 2^{d}K(z)\left[
v(x_{1}+hz)K\left(\frac{x_{1}-x_{2}}{h}+z\right)-v(x_{1})K\left(\frac{x_{1}-x_{2}}{h}\right)
\right]\mathscr{U}(\!\!\Myd z).
\end{align*}
It follows from Lemma 2.14 of \citet{PakesPollard1989} and Lemma A.2. of \citet{GhosalSenEtAl2000} that the class
$\left\{(x_{1},x_{2})\mapsto h^{d}Q(x_{1},x_{2}):h\in(0,1)\right\}$ is Euclidean.
\end{proof}

\end{appendices}

\normalsize
\bibliography{Mybib_RCPE}
\bibliographystyle{ecta}

\end{document}